\spnewtheorem{myclaim}{Claim}{\bfseries}{\itshape}
\spnewtheorem{fact}{Fact}{\bfseries}{\itshape}
\spnewtheorem{observation}{Observation}{\bfseries}{\itshape}
\title{Improved Approximation Algorithms for Earth-Mover
Distance in Data Streams}
\author{Arman Yousefi\and Rafail Ostrovsky}
\institute{University of California Los Angeles\\
\email{\{armany, rafail\}@cs.ucla.edu}}
\begin{document}

\maketitle

\begin{abstract}
For two multisets $S$ and $T$ of points in $[\Delta]^2$,
such that $|S| = |T|= n$,
the \textit{earth-mover distance} (EMD) between $S$ and $T$ is the minimum cost
of a perfect bipartite matching with edges between points in $S$ and $T$, i.e., 
$EMD(S,T) = \min_{\pi:S\rightarrow T}\sum_{a\in S}||a-\pi(a)||_1$,
where $\pi$ ranges over all one-to-one mappings. The sketching complexity of
approximating earth-mover distance in the two-dimensional grid is mentioned
as one of the open problems in \cite{openproblems1, openproblems2}. 
We give two algorithms for computing EMD between two multi-sets when
the number of distinct points in one set is a small value
$k=\log^{O(1)}(\Delta n)$.
Our first algorithm gives a $(1+\epsilon)$-approximation using
$O(k\epsilon^{-2}\log^{4}n)$ space and works only in the
insertion-only model.
The second algorithm gives a $O(\min(k^3,\log\Delta))$-approximation
using $O(\log^{3}\Delta\cdot\log\log\Delta\cdot\log n)$-space in the turnstile model. 
\end{abstract}

\section{Introduction}
For a metric space $X$ endowed with distance function $d_X$,
the \textit{earth-mover distance} (EMD) between two multisets $S,T\subseteq X$,
where $|S|=|T|=n$ is defined as
$EMD_X(S,T) = \min_{\pi:S\rightarrow T}\sum_{a\in S}d_X(a,\pi(a))$
where $\pi$ ranges over all bijections $\pi:A\rightarrow B$.
In this paper we mostly deal with earth-mover distance over $\ell_1$.
Thus, when the metric $X$ is $\ell_1$ we omit the subscript and 
write $EMD(S,T)=EMD_{\ell_1}(S,T)
=\min_{\pi:S\rightarrow T}\sum_{a\in S}||a-\pi(a)||_1$.
Earth-mover distance over the two dimensional plane has received significant
interest in computer vision because it is a natural measure of
similarity between images \cite{tommasi, tommasi2, grayspace, thaper}.
Each image can be viewed as a set of features and the distance is the optimal
way to match various features of images, where the
cost of such a matching corresponds to the sum of distances between the features
that were matched.
Apart from being a popular
distance measure in graphics and vision, variants of earth-mover distance known
as \textit{transportation cost} are used as LP relaxations for
classification problems such as 0-extensions and metric labeling
\cite{classification_metric, metric_labeling, charikar}.

In this paper we study two-dimensional earth-mover distance in the
streaming scenario.
In the streaming scenario for earth-mover distance, the two multisets of input
points are revealed to the algorithm as a stream of labeled points. The
algorithm maintains a short ``sketch'' of the data that can later be used to 
estimate the cost of the optimal matching. Note that the algorithm
does not produce an approximately optimal matching but only estimates its cost. 
The stream can be viewed as a
sequence of $m$ operations where each operation either adds a point to one of
the two multisets or removes a point from one of them. 
The streaming model in which insertion and deletion of
points are both allowed is referred to in the literature as the \textit{dynamic
data stream} model. It's also called the \textit{turnstile} model.
The alternative is a streaming model in which only insertions of points are
allowed, and such a data stream is referred to as \textit{insertion-only} model.

\subsubsection{Discrete geometric space}
In data stream scenario, we assume that points live in the discrete space
$\{1,\ldots,\Delta\}^2$ (denoted by $[\Delta]^2$) instead of the continuous
two-dimensional interval $[0,\Delta]^2$
where $\Delta$ is an integer upper bound on the diameter of the point set.
This is not a common assumption in computational geometry where input points
commonly have real coordinates. However, in real-life computations and in data
stream algorithms, the discrete space is a common assumption because the input
is assumed to have a finite precision.

Note that the assumption that the points of the two input multisets $S$ and $T$
live in the discrete space $[\Delta]^2$ implies that the distance between 
a point in $S$ and a point in $T$ is at least one. We assume that $S$ and $T$
are multisets, so multiple points of $S$ (or multiple points of $T$)
can share a location on the plane. However, we assume
that no point of $S$ shares location with a point of $T$. 

\subsection{Previous Results.}
Computing the earth-mover distance is a fundamental geometric problem, and
there has been extensive body of work focused on designing efficient algorithms
for this problem \cite{lawler, vai89, charikar, thaper, av04, eps_matching}.
The challenge in designing efficient streaming algorithms for earth-mover
distance is to construct and maintain a small space representation (or sketch)
of both multisets from which earth-mover distance between them can be
approximated.  In one dimension, the EMD between two multisets
can be reduced to calculating $\ell_1$ difference between two vectors
representing the point sets in $[\Delta]$. If the number of points in each
multiset is $n$, the $\ell_1$ difference between two vectors of size $\Delta$
can be approximated within a factor of $1+\epsilon$ for any $\epsilon>0$ using
$O(1/\epsilon^2\log n\log\Delta)$ (see Fact \ref{fact1}). Thus, the EMD between
two multiset of points in one dimensional space over a dynamic data stream can
be approximated within a factor of $1+\epsilon$ using
$O(1/\epsilon^2\log n\log\Delta)$ space.
This is a folklore result, and the interested reader is referred to
\cite{emdgraphmetrics} for a detailed explanation.

In \cite{ind04}, Indyk gives a $O(\log\Delta)$-approximation
algorithm for estimating the EMD between two multisets in $[\Delta]^2$ in one
pass over the data that uses $O(\log^{O(1)}(\Delta n))$ space.
His algorithm uses a probabilistic embedding of the EMD into $\ell_1$ that
has $O(\log\Delta)$ distortion \cite{thaper, charikar}.
Later, Naor and Schechtman \cite{naor} showed that any embedding of EMD into
$\ell_1$ must incur a distortion of at least $\Omega(\sqrt{\log\Delta})$, so 
it is not possible to approximate EMD over a data stream within a factor better
than $\Omega(\sqrt{\log\Delta})$ by embedding EMD into $\ell_1$. 

In \cite{const-approx} Andoni et al. gave a $O(1/c)$-approximation
algorithm for estimating EMD in the two-dimensional grid $[\Delta]^2$ using space
$O(\Delta^c\log^{O(1)}(\Delta n))$ for any $0<c<1$. 
Their algorithm uses the result of \cite{ind07}
which decomposes the cost EMD over $[\Delta]^2$ into a sum of closely related
metrics called EEMD, defined over $[\Delta^\epsilon]^2$. Each component of the
sum is a sub-matching between subsets of the two original multisets. 
In \cite{ind07} Indyk shows how to estimate the sum of sub-matchings by sampling
sub-matchings using a random distribution where the probability of choosing a
sub-matching is roughly proportional to its cost. In \cite{const-approx} the
authors show how to approximate the sum of sub-matchings over a data stream.

For earth-mover distance in high dimensions, Khot and Naor \cite{kn06} show
that any embedding of EMD over the $d$-dimensional Hamming
cube into $\ell_1$ must incur a distortion $\Omega(d)$, thus practically
losing all distance information. Andoni et al. \cite{kraught08} circumvent this
roadblock by focusing on sets with cardinalities upper-bounded by a parameter
$s$, and achieve a distortion of only $O(\log s\cdot\log (d\Delta))$.
As a result, they show a $O(\log s\log d\Delta)$-approximation
streaming algorithm that uses $O(d\log^{O(1)}(s\Delta))$ space.   

\subsection{Our Results}
In this paper we give two streaming algorithms for approximating
EMD in the two-dimensional grid $[\Delta]^2$ when the number of distinct points
in one of the multisets is polylogarithmic. This is an interesting case because
in applications the feature sets of images usually have bounded size.
A similar case for high dimensions has been studied before in \cite{kraught08},
but our constraint on the input is more relaxed than that of \cite{kraught08}
because we require a bound on the number of distinct
points in only one of the multisets while in \cite{kraught08}
they require a bound on the size of both sets.
The special case of EMD that we study
is also important because of its connections to the \textit{capacitated
$k$-median problem} with hard constraints as we explain shortly.

Our first algorithm gives a $(1+\epsilon)$-approximation for any $\epsilon > 0$
using space $O(k\epsilon^{-2}\log^{4}n)$. This algorithm uses coresets
for $k$-median problem and it works in the insertion-only model.
Our second algorithm works in the turnstile model (or dynamic geometric streams)
and it gives a weaker approximation of $O(\min(k^3, \log\Delta))$ using
$O(\log^3\Delta\cdot\log\log\Delta\cdot\log n)$ space.
Both algorithms naturally extend to work for higher
dimensions. However, the second algorithm is better suited for higher dimensions
because its memory usage does not depend exponentially on dimension $d$. 
The following table summarizes our results.

\begin{table}
\begin{center}
\begin{tabular}{|c|c|c|c|}
\hline
Algorithm & Approximation & space & Model  \\
\hline
Algorithm 1 & $1 + \epsilon$ & $O(k\epsilon^{-2}\log^{4}n)$& insertion-only\\
Algorithm 2 & $O(\min(k^3, \log\Delta))$ & $O(\log^3\Delta\cdot\log\log\Delta\cdot\log n)$& turnstile\\
\hline
\end{tabular}
\end{center}
\end{table}

\subsubsection{Connections to Capacitated $k$-median Clustering}
The non-streaming version of Capacitated $k$-median clustering
has been studied before (for example \cite{cap_kmedian1, cap_facil}),
and it is known to be harder than $k$-median clustering with no capacities.
In capacitated $k$-median clustering with uniform capacities over
a data stream, in addition to a parameter $k$ and a point set
$P\subseteq[\Delta]^2$, we are given a parameter $c \ge n/k$.
The goal is to find a set $Q\subseteq [\Delta]^2$ of size $k$ that minimizes
$\sum_{p\in P} ||p-f(p)||_2$ where $f(p)$ is one of the $k$ centers
that $p$ is assigned to, and that the number of points
assigned to each of the $k$ centers doesn't exceed its capacity $c$.

Our algorithms for earth-mover distance can be extended to algorithms for
capacitated $k$-median clustering with hard constraints.
The input point set of the capacitated $k$-median clustering can be viewed
as one of the point sets in the earth-mover distance and any set
of $k$ centers whose capacities add up to $n$ can be viewed as
the other multiset of points.
The $k$-median cost of a point set respect to a given set of centers
is the earth-mover distance between the input point set and the centers.
The streaming algorithm for earth-mover distance can be used
to keep a sketch of the input point set. At the end of the stream,
the algorithm exhaustively searches all possibilities for $k$ center points and,
for each choice of $k$ centers, all possible capacities of centers that do
not violate capacity constraints and add up to $n$.
For each possibility the algorithm approximates the
earth-mover distance between the input point set and the capacitated centers
and reports the centers with minimum value. Thus, the algorithm
exhaustively searches all $\Delta ^{O(k)}$ possibilities using small space
and returns an approximate solution
to the capacitated $k$-median problem with hard constraints.
Note that the above algorithm does not violate capacity constraints.
Thus, any of the algorithms in this paper can be turned into a streaming
algorithm for the capacitated $k$-median clustering with hard constraints.

\section{First Algorithm}
 In this section we show how to use coresets for $k$-median to give
a $(1+\epsilon)$-approximation algorithm for EMD.
 
For a point set $C$ and a point $p$, both in $\mathbb{R}^d$,
let $d(p,C)= \min_{c\in C}||p-c||_2$ denote the distance of $p$
from $C$.
For a weighted point set $P\subseteq\mathbb{R}^d$, with
an associated weight function $w: P\rightarrow\mathbb{Z}^+$
and any point set $C$ of $k$ points, we define
$Median(P,C) = \sum_{p\in P}w(p)d(p,C)$ as the the \textit{price} of
$k$-median clustering provided by $C$. In the $k$-median
problem, the goal is to find a set $C$ of at most $k$ points
in $\mathbb{R}^d$ such that $Median(P,C)$ is minimized.
We also use $Median_{opt}(P,k) = \min_{C\subseteq\mathbb{R}^d, |C|=k}Median(P,C)$
to denote the price of the optimal $k$-median clustering for $P$.

\begin{definition}[Coreset]
For a weighted point set $P\subseteq\mathbb{R}^d$, a weighted set
$P_{core}\subseteq\mathbb{R}^d$ is a $(k,\epsilon)$-coreset for the
$k$-median problem if for every set $C$ of $k$ centers:
$(1-\epsilon)\cdot Median(P,C)\le Median(P_{core},C)
\le (1+\epsilon)\cdot Median(P,C)$. 
\end{definition}

Har-Peled and Mazumdar \cite{kcoreset} prove the existence of small
coresets for the $k$-median problem and show how to construct them.
They also show how to construct and maintain coresets over data streams
using polylogarithmic space when the points are only inserted into the
stream. We use the following fact from \cite{kcoreset}.

\begin{fact}[Theorem 7.2 from \cite{kcoreset}]\label{har-peled}
 Given an insertion-only stream $P$ of $n$ points in $\mathbb{R}^d$ and
$\epsilon>0$, one can maintain a
$(k,\epsilon)$-coreset of size $O(k\epsilon^{-d}\log^2 n)$ for $k$-median.
The space used by the algorithm is $O(k\epsilon^{-d}\log^{2d+2} n)$ and the
amortized update time is $O(\log^2(k/\epsilon) + k^5)$.
\end{fact}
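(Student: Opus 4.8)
The plan is to follow the two‑stage strategy behind \cite{kcoreset}: first construct a small coreset for a point set given in a batch, then lift the static construction to the streaming setting via the merge‑and‑reduce (logarithmic levels) scheme.

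\emph{Static construction.} First I would compute, in time $k^{O(1)}\cdot n$ by local search, a constant‑factor approximation to $Median_{opt}(P,k)$ together with a set of $O(k)$ approximate centers; write $R$ for the resulting cost estimate. Around each approximate center $c$ I lay down a nested family of axis‑parallel grids: at distance scale $2^{j}R/n$ (for $j=0,1,\dots,O(\log n)$) the cells have side length $\Theta(\epsilon\, 2^{j}R/(n\sqrt{d}))$. Every input point is snapped to the center of the grid cell containing it, and the coreset point sitting at a cell center is given weight equal to the number of input points snapped there; points within distance $R/n$ of an approximate center are snapped directly. The size bound $O(k\epsilon^{-d}\log n)$ then follows since there are $O(k)$ centers, $O(\log n)$ scales, and $O(\epsilon^{-d})$ occupied cells per scale.

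\emph{Correctness of the static coreset.} The heart of the argument is to show that for \emph{every} set $C$ of $k$ centers simultaneously, $|Median(P_{core},C)-Median(P,C)|\le \epsilon\, Median(P,C)$. Fix such a $C$. If the snapping moved a point $p$ to $p'$ by a displacement $\delta_p$, the triangle inequality gives $|d(p,C)-d(p',C)|\le \delta_p$, so it suffices to bound $\sum_p \delta_p$ by $\epsilon\, Median(P,C)$. A point at scale $2^{j}R/n$ from its nearest approximate center moves by $O(\epsilon\, 2^{j}R/n)$, while a standard packing/charging argument shows that such points contribute, on average, $\Omega(2^{j}R/n)$ to $d(\cdot,C)$; the cheap points near the approximate centers are covered by the direct‑snapping rule, and the residual additive slack is $O(\epsilon R)=O(\epsilon\, Median_{opt}(P,k))$. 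Summing over scales and centers gives the bound.

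\emph{Streaming via merge‑and‑reduce, and the main obstacle.} To maintain the coreset under insertions I keep at most one coreset at each of $O(\log n)$ levels, the level‑$i$ coreset summarizing $2^{i}$ points; a new point forms a singleton, and whenever two coresets of equal level coexist they are merged and the static construction is re‑run on their union, promoting the result one level up. Correctness uses two composability facts that follow immediately from the definition: the union of coresets of disjoint sets is a coreset of the union, and a $(k,\epsilon_1)$‑coreset of a $(k,\epsilon_2)$‑coreset is a $(k,\epsilon_1+\epsilon_2+\epsilon_1\epsilon_2)$‑coreset. Since each point takes part in at most $O(\log n)$ nested re‑computations, running every static call with accuracy $\epsilon'=\Theta(\epsilon/\log n)$ keeps the cumulative error below $\epsilon$; this inflates each coreset by a $\log^{O(d)} n$ factor and, together with the $O(\log n)$ stored levels, yields the claimed $O(k\epsilon^{-d}\log^{2d+2} n)$ space. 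The update time is an amortized $O(1)$ merges per insertion (binary‑counter argument), each merge being one constant‑factor $k$‑median call on a polylogarithmic‑size set ($k^5$) plus $O(\log^{2}(k/\epsilon))$ for grid point‑location. I expect the delicate part to be the uniform‑over‑all‑$C$ analysis of the previous paragraph — arguing that the per‑scale movement $O(\epsilon\,2^{j}R/n)$ is truly dominated by those points' contribution to $Median(P,C)$ for an \emph{arbitrary} $C$, not merely for the approximate centers used to build the grids — and, secondarily, tracking the $\epsilon\mapsto\epsilon/\log n$ rescaling through the merge tree without losing the stated exponents.
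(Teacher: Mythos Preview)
The paper does not prove this statement at all: it is quoted verbatim as Theorem~7.2 of Har-Peled and Mazumdar~\cite{kcoreset} and used as a black box, so there is no in-paper proof to compare your proposal against. Your sketch (constant-factor $k$-median $\to$ exponential grids around the approximate centers $\to$ merge-and-reduce with $\epsilon' = \Theta(\epsilon/\log n)$) is precisely the strategy of the cited source, so in that sense your approach is the ``right'' one; any detailed checking of the exponents and the uniform-over-$C$ charging argument would have to be done against~\cite{kcoreset}, not against the present paper.
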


\subsection{Algorithm Description}
Let $S, T\subseteq [\Delta]^2$ be two multisets of points such that $|S|=|T|=n$,
where the number of distinct points in one of the sets is at most
$k = \log^{O(1)}(\Delta n)$. Assume without loss of generality that the
number of distinct points in $T$ is $k$.
The points of $S$ and $T$ are revealed to the algorithm in an insertion-only
stream, the following algorithm computes an estimate of $EMD(S,T)$ as follows.
The algorithm maintains a $(k,\epsilon)$-coreset for $k$-median
for the set $S$ using Fact \ref{har-peled}. For $d=2$ the space needed to
maintain the coreset is $O(k\epsilon^{-2}\log^{4}n)$. Let $S_{core}$ denote
the coreset of $S$. The algorithm also
keeps the entire set $T$ of points in its memory. At the end of
the stream the algorithm computes $EMD(S_{core}, T)$ using the
``Hungarian'' method \cite{lawler}.

We claim that $EMD(S_{core}, T)$ is a $(1+\epsilon)$-approximation
of $EMD(S,T)$. An important property of the coresets constructed for 
$k$-median in \cite{kcoreset} that allows us to extend the use of
coresets to earth-mover distance is the following. There is a
one-to-one correspondence between the points of $S$ and $S_{core}$.
If for any $p\in S$, $p'$ denotes the image of $p$ in $S_{core}$,
then the coreset construction guarantees that
$\sum_{p\in S}||p-p'||_2 \le \epsilon\cdot Median_{opt}(S,k)$.
Intuitively this means that each point of $S$ is snapped to
a point of $S_{core}$ such that the sum of movements of points
of $S$ is at most $\epsilon\cdot Median_{opt}(S,k)$. It's easy to
see from this property that for any set $C$ of points
$|Median(S,C)-Median(S_{core},C)|$ is at most
$\sum_{p\in S}||p-p'||_2 \le \epsilon\cdot Median_{opt}(S,k)$.

We now show how this property of the $(k,\epsilon)$-coreset
for $k$-median can be used to bound $|EMD(S_{core},T) - EMD(S,T)|$.
If for every point
$p\in S$, its image in $S_{core}$ is denoted by $p'$, we have: 

\begin{align*}
EMD(S_{core},T) - EMD(S,T) 
 &\le \sum_{p\in S}||p-p'||_1\le \sqrt{2}\cdot \sum_{p\in S}||p-p'||_2 \\
 &\le \sqrt{2}\cdot \epsilon\cdot Median_{opt}(S,k)\le O(\epsilon)\cdot EMD(S,T).
\end{align*}

The last inequality above holds because
$Median_{opt}(S,k) \le \sum_{p\in S} d(p, T)\le EMD_{\ell_2}(S,T)\le EMD_{\ell_1}(S,T)$.
Thus we have shown that $EMD(S_{core},T) \le (1+O(\epsilon))\cdot EMD(S,T)$.
We can also show that 
$EMD(S_{core},T) \ge (1-O(\epsilon))\cdot EMD(S,T)$
using a similar argument. Thus we have:

\begin{theorem}
For any $\epsilon>0$ and any two multisets $S,T\subseteq[\Delta]^2$, where
$|S|=|T|=n$, the number of distinct points in one set is bounded by $k$,
and the points are revealed to the algorithm in an insertion-only stream,
there is a one-pass streaming algorithm that approximates
$EMD(S,T)$ within a factor of $(1\pm\epsilon)$ and uses space
$O(k\epsilon^{-2}\log^{4}n)$.
\end{theorem}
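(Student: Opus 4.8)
The plan is to make the algorithm sketched in the preceding discussion fully precise and then verify its two claimed properties, the accuracy of the estimate and the $O(k\epsilon^{-2}\log^4 n)$ space bound. First I would fix the algorithm: since EMD is symmetric, assume without loss of generality that $T$ has at most $k$ distinct points. On the stream, run the streaming coreset construction of Fact~\ref{har-peled} with $d=2$ on the substream of points labeled as belonging to $S$, obtaining a weighted set $S_{core}$ that is a $(k,\epsilon)$-coreset for $k$-median; simultaneously store the at most $k$ distinct points of $T$ together with their multiplicities. At the end of the stream, expand $S_{core}$ and $T$ back into multisets and output $EMD(S_{core},T)$, computed exactly by the Hungarian method in post-processing.

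For the space bound, Fact~\ref{har-peled} with $d=2$ gives a coreset occupying $O(k\epsilon^{-2}\log^4 n)$ words, and storing $T$ costs $O(k(\log\Delta+\log n))$ words (one location in $[\Delta]^2$ plus a multiplicity bounded by $n$ for each of the $k$ distinct points), which is dominated by the coreset; the Hungarian computation runs only on the retained sketch after the stream ends and so does not enter the streaming space. For correctness, the Har-Peled--Mazumdar construction supplies a one-to-one correspondence $p\mapsto p'$ between $S$ and $S_{core}$ with $\sum_{p\in S}||p-p'||_2 \le \epsilon\cdot Median_{opt}(S,k)$. Given a bijection $\pi$ realizing $EMD(S,T)$, the map $p'\mapsto\pi(p)$ is a perfect matching between $S_{core}$ and $T$, so $EMD(S_{core},T)\le EMD(S,T)+\sum_{p\in S}||p-p'||_1$; pushing an optimal matching for $(S_{core},T)$ back through the correspondence gives the reverse inequality, hence $|EMD(S_{core},T)-EMD(S,T)|\le\sum_{p\in S}||p-p'||_1\le\sqrt2\sum_{p\in S}||p-p'||_2\le\sqrt2\,\epsilon\cdot Median_{opt}(S,k)$. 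Taking $C$ to be the set of distinct points of $T$ (of size at most $k$) yields $Median_{opt}(S,k)\le Median(S,C)=\sum_{p\in S}d(p,C)\le\sum_{p\in S}||p-\pi(p)||_2\le EMD(S,T)$, so the additive error is $O(\epsilon)\cdot EMD(S,T)$. Finally, running the construction with $\epsilon'=\Theta(\epsilon)$ absorbs the constant and gives a $(1\pm\epsilon)$-approximation; since no point of $S$ shares a location with a point of $T$ we have $EMD(S,T)\ge n\ge1$, so the multiplicative guarantee is meaningful.

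The one genuinely delicate point — and it is mild — is the triangle-inequality step: it relies on the correspondence $p\mapsto p'$ being an honest bijection of \emph{multisets}, so that composing it with a perfect matching again produces a perfect matching. This is precisely the extra structural property of the coreset of \cite{kcoreset} emphasized in the discussion above; a black-box coreset guaranteeing only that $|Median(S,C)-Median(S_{core},C)|$ is small for each $C$ would not immediately give the EMD bound. Everything else is routine, so I expect the write-up to consist mainly of stating the algorithm, citing Fact~\ref{har-peled} for the coreset and its space, and assembling the chain of inequalities above.
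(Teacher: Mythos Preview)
Your proposal is correct and follows essentially the same argument as the paper: maintain the Har-Peled--Mazumdar $(k,\epsilon)$-coreset $S_{core}$ for $S$, store $T$ explicitly, output $EMD(S_{core},T)$, and bound the error via the pointwise bijection $p\mapsto p'$ together with $Median_{opt}(S,k)\le EMD(S,T)$. Your write-up is in fact a bit more careful than the paper's in spelling out the space for storing $T$, the rescaling of $\epsilon$, and the fact that the coreset's bijection-of-multisets property (not just the black-box $(k,\epsilon)$-coreset guarantee) is what makes the triangle-inequality step go through.
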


\subsubsection{Why do coresets for dynamic data streams fail for EMD?}
A natural question to ask is if coresets can also be used for dynamic data
streams where insertion and deletions are both allowed.
Frahling and Sohler \cite{coreset_dynamic}
proposed a method for constructing coresets that work for dynamic data streams.
Their coreset construction is based on sampling points from the
data stream, and it works for $k$-median, but it cannot be
used for earth-mover distance.
For a given coreset $S_{core}$ of $S$, their algorithm constructs
a set $S'_{core}$ such that the point locations in $S_{core}$ and
$S'_{core}$ are the same, but the weight of every point in $S'_{core}$
differs from the corresponding point in $S_{core}$ by a factor of
at most $(1 \pm\epsilon)$.
Thus, for any set $C$ of $k$ points, $Median(S_{core},C)$ and
$Median(S'_{core},C)$ differ by at most a factor of $(1 \pm\epsilon)$
and computing the $k$-median cost for $S'_{core}$ approximates
the $k$-median cost for $S_{core}$ and $S$.
However, this argument does not work for earth-mover distance because
$EMD(S_{core},T)$ and $EMD(S'_{core},T)$ may differ 
significantly.
We mention a simple example to show that EMD is very sensitive to
the weight of points in $S_{core}$, 
let $T$ be a multiset containing two distinct points far from each other,
each with weight $n/2$.
Let also $S_{core}$ be a coreset for $S$ that contains
exactly two weighted points, each with weight $n/2$.
Each point of $S_{core}$ is at distance one to a point in $T$.
In this case $EMD(S_{core},T) = n$, but changing the weights of points in
$S_{core}$ by a factor of $(1\pm\epsilon)$ may affect the cost of
$EMD(S_{core},T)$ significantly.

\section{Second Algorithm}\label{algorithm2}
Our first algorithm gives a $(1+\epsilon)$-approximation, but it doesn't work
in dynamic geometric streams, and its space requirement is
$O(k\epsilon^{-2}\log^{4}n)$.
We next present our second algorithm that works on dynamic geometric streams
(when deletions are also allowed) and requires much less space specially for
higher dimensions, but these advantages come at the cost of a weaker
approximation ratio.
 
We start this section with some preliminaries and notations
used in our description of the second algorithm and its related proofs.
We use $E^*$ to denote the set of edges of minimum-cost bipartite perfect
matching between points of the two input multisets $S$ and $T$. 
For an edge $e$ that matches a point $p\in S$
with $q\in T$, let $||e||_1$ denote the $\ell_1$ distance between $e$'s
endpoints.
The cost of the matching $E^*$ or the earth-mover
distance between multisets $S$ and $T$ is $EMD(S,T) = \sum_{e\in E^*}||e||_1$.

For a grid over $\mathbb{R}^2$, we use a grid's \textit{cell size}
to refer to the side length of cells in the grid.
Fix a grid $G$
over $\mathbb{R}^2$ whose cell size is a positive integer. For every
multiset $S\subseteq[\Delta]^2$, we define $V_G(S)$ to be the
\textit{characteristic vector} of $S$ with respect to $G$. Each coordinate of
$V_G(S)$ corresponds to a cell of $G$ that intersects $(0,\Delta)^2$, and the
value of that coordinate is the number of points of $S$ in the corresponding 
cell.
In the context of our algorithm, we avoid having points that live on the grid 
lines so that the number of points that fall into a cell of the grid is
defined without ambiguity. Since the points have integral coordinates,
we can ensure that the points are in the interior of the grid cells
by restricting the grid lines to have
half-integral coordinates $\ldots,-\frac{3}{2}, -\frac{1}{2}, \frac{1}{2},
\frac{3}{2}, \frac{5}{2} \ldots$.

Throughout this section, we talk about grids that are shifted by some random
$2$-dimensional vectors with half-integral coordinates. 
We assume that each grid prior to shift is fixed at the origin $(0,0)$.
Thus, after shifting a grid by vector $\vec{v}=(x_0,y_0)$, the grid point
at $(0,0)$ is moved to $(x_0,y_0)$, and the rest of the grid translates
accordingly. Thus, we ensure that the lines of the shifted grid have
half-integral coordinates.

To estimate the earth-mover distance over data streams, our algorithm maintains
sketches of characteristic vectors of the two input sets
with respect to different grids. These sketches enable us to estimate
the $\ell_1$ and $\ell_0$ norms of the characteristic vectors
\footnote{The $\ell_0$ norm of a vector is
also referred to as the frequency moment $F_0$ in the literature.}. 

Let $V$ be an $N$-dimensional vector whose coordinates are values in the set
$\{1,\ldots ,M\}$. 
The $\ell_1$ norm of $V=(x_1,\ldots,x_N)$ is 
$||V||_1 = \sum_{i = 1}^N|x_i|$
and its $\ell_0$ norm is $||V||_0=|\{x_i: x_i\neq 0\}|$.
We use the following two facts from \cite{stable} and \cite{bar_yossef}
to maintain a sketch for the $\ell_1$ and $\ell_0$ norm of vector $V$ whose
coordinates are dynamically updated in a data stream. 
Each update in the stream is of the form $(i,a)$ which adds $a$ to the $i$-th
coordinate of $V$.

\begin{fact}[Theorem 2 of \cite{stable}]\label{fact1}
There is an algorithm that, for any $\epsilon, \delta >0$, estimates the
$\ell_1$ norm of $V$ up to a factor of $(1\pm\epsilon)$ with probability
$1-\delta$ and uses $O(1/\epsilon^2\cdot\log M\cdot\log(N/\delta)\cdot\log(1/\delta))$
bits of memory.
\end{fact}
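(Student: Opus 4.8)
The plan is to estimate $\|V\|_1$ by a \emph{linear sketch} built from $1$-stable (Cauchy) random variables, following Indyk's approach. Fix a repetition count $r=\Theta(\epsilon^{-2}\log(1/\delta))$ and draw an $r\times N$ matrix $A$ whose entries $A_{kj}$ are i.i.d.\ standard Cauchy variables. The algorithm maintains the sketch $Y=AV\in\mathbb{R}^r$; an update $(i,a)$ just adds $a\cdot(A_{1i},\ldots,A_{ri})$ to $Y$, so the sketch is maintainable in a turnstile stream and depends only on the final $V$. The crucial distributional fact is $1$-stability of the Cauchy law: for each fixed $k$, $Y_k=\sum_j A_{kj}V_j$ is distributed exactly as $\|V\|_1\cdot Z_k$ where $Z_k$ is a standard Cauchy, and the $Z_k$ are independent across $k$.

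Next I would analyze the estimator $\widehat{L}=\mathrm{median}(|Y_1|,\ldots,|Y_r|)/\mu$, where $\mu$ is the constant with $\Pr[|Z|\le\mu]=\tfrac12$ for $Z$ standard Cauchy (here $\mu=1$). Since $|Y_k|=\|V\|_1\cdot|Z_k|$, we have $\widehat{L}=\|V\|_1\cdot\mathrm{median}(|Z_1|,\ldots,|Z_r|)/\mu$, so it suffices to show the sample median of $r$ i.i.d.\ copies of $|Z|$ lies in $(1\pm\epsilon)\mu$ with probability $\ge 1-\delta$. The key analytic input is that the CDF $F$ of $|Z|$ is differentiable at $\mu$ with $F'$ bounded below by a positive constant near $\mu$, so $F((1+\epsilon)\mu)\ge\tfrac12+c\epsilon$ and $F((1-\epsilon)\mu)\le\tfrac12-c\epsilon$ for a constant $c>0$. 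If the sample median exceeds $(1+\epsilon)\mu$ then fewer than $r/2$ of the $|Z_k|$ fall below $(1+\epsilon)\mu$; this is a sum of $r$ independent indicators each with mean $\ge\tfrac12+c\epsilon$ dropping below its expectation by $\Omega(\epsilon r)$, which by a Chernoff bound has probability $\exp(-\Omega(\epsilon^2 r))$, and symmetrically for the lower tail. Taking $r=\Theta(\epsilon^{-2}\log(1/\delta))$ pushes the total failure probability below $\delta$.

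The remaining work, and the main obstacle, is the \emph{space} bound, since a fully independent real-valued matrix $A$ is neither finite nor small. Two issues must be resolved. First, \emph{precision}: replace each Cauchy entry by a value rounded to $O(\log(NM/(\epsilon\delta)))$ bits, using that $\|V\|_1\le NM$ bounds the magnitude of the sketch coordinates and that the Cauchy tail is polynomially light, so truncating entries at $\mathrm{poly}(NM/(\epsilon\delta))$ contributes only $\mathrm{poly}$-small error; one then checks that the accumulated rounding perturbs each $|Y_k|$ by a $(1\pm\epsilon')$ factor for $\epsilon'$ small enough to be absorbed, leaving $O(\log M)$ bits of precision to charge per repetition. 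Second, \emph{derandomization}: the streaming algorithm is a small-space machine that reads the entries $A_{kj}$ in a fixed order, so by Nisan's pseudorandom generator its output distribution is within $\delta$ of the truly-random case while only an $O(\log(N/\delta))$-size seed needs to be stored (alternatively, a bounded-independence analysis of the median estimator can be used). Multiplying the $O(\epsilon^{-2}\log(1/\delta))$ repetitions by the $O(\log M)$ bits of precision and the $O(\log(N/\delta))$ derandomization overhead yields the claimed $O(\epsilon^{-2}\log M\cdot\log(N/\delta)\cdot\log(1/\delta))$ bits. I expect the truncation/precision bookkeeping and the formal PRG fooling argument to be the delicate parts; the stable-sketch median analysis itself is routine once set up.
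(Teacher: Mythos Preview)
The paper does not prove this statement; it is stated as a Fact with a citation to \cite{stable} (Indyk's stable-distributions paper) and used as a black box throughout. Your proposal is a correct and faithful sketch of the argument in that cited reference: the $1$-stable (Cauchy) linear sketch, the median-of-absolute-values estimator with Chernoff-based concentration, finite-precision truncation, and Nisan's PRG for derandomization together give exactly the claimed space bound. There is nothing to compare against in the present paper beyond noting that your reconstruction matches the intended source.
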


\begin{fact}[Theorem 1 of \cite{bar_yossef}]\label{fact2}
There is an algorithm that, for any $\epsilon,\delta >0$, estimates the $\ell_0$
norm of $V$ up to a factor of $(1\pm\epsilon)$ with probability $1-\delta$
using $O(1/\epsilon^2\cdot \log N\cdot \log(1/\delta))$ bits of memory.
\end{fact}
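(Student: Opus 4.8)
The plan is to build the $\ell_0$ estimator from two ingredients: a \emph{subsampling} scheme that reduces the problem to counting a small number of nonzero coordinates, and a small linear sketch that recovers a set of nonzero coordinates exactly whenever that set is small. First I would reduce the ``$(1\pm\epsilon)$ with probability $1-\delta$'' guarantee to a ``$(1\pm\epsilon)$ with constant probability'' guarantee by running $O(\log(1/\delta))$ independent copies of a base estimator and reporting the median; a Chernoff bound over the copies contributes the $\log(1/\delta)$ factor in the space, so it suffices to design one base estimator using $O(\epsilon^{-2}\log N)$ bits.

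For the base estimator I would fix a pairwise-independent hash function $h:[N]\to[N]$ (storable in $O(\log N)$ bits) and, for each level $j\in\{0,1,\ldots,\lfloor\log N\rfloor\}$, consider the sub-vector $V^{(j)}$ of $V$ keeping only coordinates $i$ with $h(i)<N/2^{j}$; thus a fixed nonzero coordinate survives to level $j$ with probability $\approx 2^{-j}$, and the level samples are nested. Because the updates $(i,a)$ act linearly on $V$, the substream feeding each $V^{(j)}$ is well defined even under deletions (turnstile updates). At the level $j^\star$ where $\mathbb{E}\,\|V^{(j)}\|_0 = \Theta(\epsilon^{-2})$, pairwise independence (a second-moment bound) gives that $\|V^{(j^\star)}\|_0$ is within a $(1\pm\epsilon)$ factor of $2^{-j^\star}\|V\|_0$ with constant probability, so if we could learn $\|V^{(j^\star)}\|_0$ exactly we would output $2^{j^\star}\,\|V^{(j^\star)}\|_0$ and be done.

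To learn $\|V^{(j)}\|_0$ when it is small, I would attach to each level a linear sketch that further hashes the surviving coordinates into $\Theta(\epsilon^{-2})$ buckets and stores in each bucket a constant number of fingerprints of the coordinates landing there — for instance $\sum_i x_i$, the value-weighted index sum $\sum_i i\,x_i$, and $\sum_i x_i\gamma^{\,i}\bmod p$ for a random $\gamma$ and a prime $p$ of $O(\log N)$ bits — which together let us test whether a bucket holds exactly one nonzero coordinate and, if so, decode its index and value. Iterating $O(\log(1/\epsilon))$ rounds of such tables with geometrically shrinking sizes peels off every nonzero coordinate, and hence recovers the exact count, provided the count is $O(\epsilon^{-2})$. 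Each register is $O(\log N)$ bits and there are $O(\epsilon^{-2})$ of them per level, which naively over all $O(\log N)$ levels is $O(\epsilon^{-2}\log^2 N)$ bits; to shave the extra $\log N$ I would first obtain a crude constant-factor estimate of $\|V\|_0$ from a single coarse sketch and then only instantiate the $O(1)$ levels whose expected survivor count is near $\Theta(\epsilon^{-2})$ (alternatively, share hash tables across levels), so that the total is $O(\epsilon^{-2}\log N)$.

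The main obstacle is precisely this space accounting together with the correctness of the small-support recovery in the turnstile setting: I have to argue that, with constant probability, (a) there \emph{is} a level $j^\star$ whose survivor count is simultaneously $\Theta(\epsilon^{-2})$ and within $(1\pm\epsilon)$ of its expectation, (b) the bucketed sketch at that level has no undetectable collisions, so the recovered count is exact — which requires that the fingerprint taken modulo an $O(\log N)$-bit prime almost never falsely certifies a bucket as $1$-sparse — and (c) all of this survives the balancing of the sampling rate, the bucket count, and the number of retained levels down to $O(\epsilon^{-2}\log N)$ bits rather than the naive $O(\epsilon^{-2}\log^2 N)$. Everything else — the median trick, the pairwise-independence second-moment estimates, and the linearity that makes the construction work under deletions — is routine.
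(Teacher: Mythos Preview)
The paper does not prove this statement at all: Fact~\ref{fact2} is quoted verbatim as Theorem~1 of~\cite{bar_yossef} and used as a black box, with no argument given. There is therefore no ``paper's own proof'' to compare your proposal against.

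Your sketch is a plausible outline of a turnstile $\ell_0$ estimator, and you correctly flag the delicate point, namely getting the space down from the naive $O(\epsilon^{-2}\log^2 N)$ to $O(\epsilon^{-2}\log N)$. One caution: your proposed fix --- run a coarse constant-factor estimator first and then instantiate only $O(1)$ levels --- is circular as stated in a one-pass model, since choosing which levels to instantiate based on a quantity you learn from the stream means those levels were not receiving updates from the beginning. The usual workarounds (sharing a single hash across levels so the sketches nest, or compressing item identifiers with a secondary hash so that each stored element costs $O(\log(1/\epsilon))$ rather than $O(\log N)$ bits) are what actually buy the missing $\log N$ factor in~\cite{bar_yossef}; you gesture at ``share hash tables across levels'' but do not say how that reduces space. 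None of this matters for the present paper, which simply imports the result.
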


\subsubsection{Our Technique}
Our second algorithm is a modification of the idea in \cite{ind04}
which uses an embedding of EMD into $\ell_1$ that has a distortion of
$O(\log\Delta)$ \cite{thaper, charikar}.
However, since any embedding of EMD into $\ell_1$ must
incur $\Omega(\sqrt{\log\Delta})$ distortion \cite{naor}, we need additional
ideas to obtain a better approximation ratio.

The algorithm of \cite{ind04} uses nested grids
$G_i, i = 0, \ldots, \log\Delta$ over $\mathbb{R}^2$ where the cell size of
grid $G_i$ is $2^i$, and a cell in $G_i$ contains $4$ cells in
$G_{i-1}$. The nested grids are shifted by a vector chosen uniformly at random.
The multiset $S$ is mapped into 
$f(S) = (V_{G_0}(S), 2V_{G_1}(S),\ldots,2^iV_{G_i}(S),\ldots,2^{\log\Delta}V_{G_{\log\Delta}}(S))$
in the $\ell_1$ space where
$V_{G_i}(S)$ denotes the characteristic vector of multiset $S$ with respect
to grid $G_i$. In other words, $f(S)$ is obtained by concatenating vectors
$V_{G_0}(S), 2V_{G_1}(S),\ldots,2^{\log\Delta}V_{G_{\log\Delta}}(S)$.
Similarly, multiset $T$ is mapped into $f(T)$, and 
to estimate $EMD(S,T)$, the value of $||f(S) - f(T)||_1$ is computed.
The distortion of the above embedding is $O(\log\Delta)$, so 
the above algorithm gives a $O(\log\Delta)$-approximation streaming algorithm
for computing $EMD(S,T)$.

Instead of using one grid per level, our algorithm uses
$O(\log\Delta)$ randomly shifted grids at each level
$i = 0, \ldots, \log\Delta$. At each level our algorithm maintains the
$\ell_1$ norm of the difference of characteristic vectors of $S$ and $T$ with
respect to every grid at that level. At the end of the stream, we choose
the grid with minimum $\ell_1$ difference at each level and compute our estimate.
This way our algorithm circumvents $\Omega(\sqrt{\log\Delta})$ lower bound
on the distortion of embedding EMD into $\ell_1$ \cite{naor}.
The proof that the above modification gives a better approximation ratio is the
main technical part of this section.

\subsection{Algorithm Description}
For every $i=0,\ldots,\log\Delta$, our algorithm builds 
$2\log\Delta$ grids over $\mathbb{R}^2$ with cells of size $2^i$
that are randomly and independently shifted. 
As the points in the stream arrive, the algorithm maintains a sketch for
the $\ell_1$ norm of the difference of characteristic vectors of $S$ and $T$
with respect to every grid. At the end of the stream the algorithm chooses, for
each level, the grid with minimum $\ell_1$ norm and reports
$Z = \frac{{\widehat{k}}^2}{2}\sum_{i = 0}^{\log\Delta}2^i\widehat{C}_i$
as the estimate of $EMD(S,T)$ where $\widehat{C}_i$ is the estimate of
the minimum $\ell_1$ norm at level $i$ and $\widehat{k}$ is an estimate
of the minimum of the number of distinct points in $S$ and $T$.
Thus, our algorithms maintains the following data structures:
\begin{enumerate}
 \item For each $i=0,\ldots,\log\Delta$ and each
$j = 1, \ldots, 2\log\Delta$, let $G_i^j$ be a grid of cell size
$2^i$ that is shifted by a vector chosen independently and
uniformly at random
(recall that the coordinates of the shift vector are half-integral).
The algorithm maintains a sketch of vector
$V_{G_i^j}(S)-V_{G_i^j}(T)$ under addition and deletion of points from
$S$ and $T$ to estimate its $\ell_1$ norm,
$||V_{G_i^j}(S)-V_{G_i^j}(T)||_1$, at the 
end of the stream. This can be done using Fact \ref{fact1}.
 \item The algorithm also maintains a sketch to determine the number of distinct
points in $S$ and $T$. This can be done using Fact \ref{fact2}
to estimate the $\ell_0$ norm of $S$ and $T$ with respect to any grid
at level $0$.
Note that all random shift vectors result in the same grid $G_0$ at level $0$, 
and there is at most one distinct point of $S$ (or $T$) in each cell of
$G_0$, so the $\ell_0$ norm of $V_{G_0}(S)$ (or $V_{G_0}(T)$) is the number of
distinct points of multiset $S$ (or $T$).  
\end{enumerate}

Let $\widehat{k}$ be minimum of the two estimates for the $\ell_0$ norms of
$V_{G_0}(S)$ and $V_{G_0}(T)$. Then, $\widehat{k}$ estimates $k$,
the minimum of the number of distinct points in $S$ and $T$. 
We define $C_i^j=||V_{G_i^j}(S)-V_{G_i^j}(T)||_1$, and $C_i =\min_j C_i^j$.
Let $\widehat{C}_i^j$ denote the algorithm's estimate of $C_i^j$ for all $i,j$.
At every level $i$ the algorithm chooses the grid $G_i^j$ that minimizes
$\widehat{C}_i^j$ and lets $\widehat{C}_i = \min_j{\widehat{C}_i^j}$.
The output of the algorithm is 
\begin{equation}
 Z = \frac{\widehat{k}^2}{2}\sum_{i = 0}^{\log\Delta}2^i\cdot \widehat{C}_i
\end{equation}
This concludes our description of the algorithm.

\subsubsection{Space usage}
The above algorithm uses $O(\log\Delta)$ grids at each level
$i = 0,\ldots,\log\Delta$, and maintains the $\ell_1$ norm of the difference of
characteristic vectors of $S$ and $T$ with respect to each grid. Each vector
$V_{G_i^j}(S)-V_{G_i^j}(T)$ has at most $O(\Delta^2)$ coordinates and each
coordinate is in $\{0,1,\ldots,n\}$. By Fact \ref{fact1}, the sketch to maintain
$C_i^j = ||V_{G_i^j}(S)-V_{G_i^j}(T)||_1$ requires $O(\log n\cdot\log\frac{\Delta}{\delta'}\cdot\log\frac{1}{\delta'})$ bits of
storage where $\delta'$ is the probability of error in estimating the norm
with respect to each grid. If we want the total error probability in estimating
all $C_i^j$ to be bounded by $\delta$, we need to set $\delta' = \frac{\delta}{2\log^2\Delta}$.
With this value of $\delta'$ the space needed to maintain each $C_i^j$ is
$O(\log n\cdot\log\frac{\Delta}{\delta}\cdot\log\log\Delta\cdot\log\frac{1}{\delta})$
and the total space used to maintain the $\ell_1$ norm of these vectors is
$O(\log^3\Delta\cdot\log\log\Delta\cdot\log n)$. 
Also the space needed to maintain the number of
distinct point in $S$ and $T$ is $O(\log\Delta)$
(by Fact \ref{fact2} from \cite{bar_yossef}). 
Thus the total space used by the algorithm is still
$O(\log^3\Delta\cdot\log\log\Delta\cdot\log n)$.

To show that the estimate $Z$ returned by the algorithm approximates $EMD(S,T)$,
we will prove upper and lower bounds on the value of $Z$ in the next
section. 

\subsection{Bounding the Cost}\label{cost}
In this section we prove upper and lower bounds on the cost of the estimate
returned by the algorithm.
By Fact \ref{fact1} and Fact \ref{fact2}, the values of $k$ and
$||V_{G_i^j}(S)-V_{G_i^j}(T)||_1$ for all $i,j$ can be estimated within a factor
of $1\pm\epsilon$ for any parameter $\epsilon > 0$.
This increases the space usage by a multiplicative factor of $O(1/\epsilon^2)$
which is ignored as we take $\epsilon$ to be some small constant.
If $\widehat{k} = (1\pm\epsilon)k$ and $\widehat{C}_i^j = (1\pm\epsilon)C_i^j$,
then:
\begin{equation}
Z = \frac{\widehat{k}^2}{2}\sum_{i = 0}^{\log\Delta}2^i\cdot \widehat{C}_i 
= (1\pm\epsilon)^3\cdot \frac{k^2}{2}\sum_{i = 0}^{\log\Delta}2^i\cdot C_i
\end{equation}

For fixed $\epsilon > 0$, the factor $(1\pm\epsilon)^3$ is a small constant.
Thus, it suffices to prove upper and lower bounds on the value of
$\frac{k^2}{2}\sum_{i = 0}^{\log\Delta}2^i\cdot C_i$ instead of
$Z = \frac{\widehat{k}^2}{2}\sum_{i = 0}^{\log\Delta}2^i\cdot \widehat{C}_i$
returned by the algorithm.
In fact, to further simplify the exposition, we prove our bounds on the value of 
$Y=\frac{{1}}{2}\sum_{i = 0}^{\log\Delta}2^i\cdot C_i$ which is scaled
by a factor of $1/k^2$. Specifically we show that 
$\Omega(1/k^2)\cdot EMD(S,T)\le Y\le O(k)\cdot EMD(S,T)$ with very high probability.
In the next lemma we show a high probability upper bound on $Y$.

\begin{lemma}[Upper bound]\label{upperbound}
With high probability, the value
$Y = \displaystyle\frac{1}{2}\sum_{i = 0}^{\log\Delta}2^i\cdot C_i$
is at most $O(k)\cdot EMD(S,T)$.
\end{lemma}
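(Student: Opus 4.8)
Here is the plan. The strategy is to bound, for every level $i$, the quantity $C_i=\min_j C_i^j$ by a \emph{deterministic} quantity that depends only on the optimal matching, and to argue that the minimum over the $2\log\Delta$ independent grids at level $i$ attains this bound with very high probability. The starting point is the standard ``cut-counting'' estimate: for any single grid $G$ of cell size $2^i$ one has $\|V_G(S)-V_G(T)\|_1\le 2N_G$, where $N_G$ is the number of edges of $E^*$ whose two endpoints fall in different cells of $G$. Indeed, in a cell $c$ the count difference $S_c-T_c$ equals (number of $E^*$-edges with only their $S$-endpoint in $c$) minus (number with only their $T$-endpoint in $c$), so $|S_c-T_c|$ is at most the sum of those two counts; summing over all cells, every cut edge is counted exactly twice and every edge with both endpoints in one cell contributes nothing.

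Next I split the edges of $E^*$ at level $i$: call $e$ \emph{$i$-long} if $\|e\|_1\ge 2^i/(10k)$ and \emph{$i$-short} otherwise, and let $L_i$ be the number of $i$-long edges. Swapping the order of summation, since an edge $e$ is $i$-long only for levels with $2^i\le 10k\,\|e\|_1$, we get $\sum_i 2^i L_i=\sum_{e\in E^*}\sum_{i:\,2^i\le 10k\|e\|_1}2^i\le 20k\sum_{e\in E^*}\|e\|_1=O(k)\cdot EMD(S,T)$. Hence it suffices to prove that $C_i\le 2L_i$ for every $i$ with high probability. The key claim is that a single random grid $G_i^j$ cuts no $i$-short edge with probability greater than $1/2$. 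To see this, group the $i$-short edges by their destination points in $T$; since $T$ has at most $k$ distinct points, there are at most $k$ groups, and for each group the $S$-endpoints together with their common destination all lie within $\ell_\infty$-distance $2^i/(10k)$ of that destination, hence inside a box whose side is less than $2^i/(5k)$ in each coordinate. A randomly shifted grid line of period $2^i$ lands in the projection of such a box with probability less than $1/(5k)$, so a union bound over the at most $k$ groups and the two coordinate directions shows that with probability greater than $1-2/5>1/2$ no $i$-short edge is cut; on that event every cut edge is $i$-long, so $C_i^j\le 2N_{G_i^j}\le 2L_i$.

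Finally, because the $2\log\Delta$ grids at level $i$ are independent, the event $\min_j C_i^j\le 2L_i$ fails with probability at most $(2/5)^{2\log\Delta}\le\Delta^{-2}$, and a union bound over the $O(\log\Delta)$ levels gives $C_i\le 2L_i$ for all $i$ simultaneously with probability $1-O(\Delta^{-1})$; on that event $Y=\frac12\sum_i 2^i C_i\le\sum_i 2^i L_i=O(k)\cdot EMD(S,T)$. The crux of the whole argument is the choice of the threshold separating $i$-long from $i$-short edges at $\Theta(2^i/k)$: it must be small enough that the $i$-short clusters around the $k$ destination points are reliably missed by a random grid, which is exactly what forces the $1/k$ factor so that the argument survives the union bound over $k$ clusters; yet it must be large enough that the surviving $i$-long edges still contribute only $O(k)\cdot EMD$ to $\sum_i 2^i L_i$, which is precisely what costs a factor $k$ in the geometric series. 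Balancing these two opposing requirements is what produces the $O(k)$ approximation, and the independent repetition of $2\log\Delta$ grids per level is what upgrades the ``probability greater than $1/2$'' statement into a bound that holds across all levels with high probability.
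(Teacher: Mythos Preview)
Your proof is correct and follows essentially the same approach as the paper: define a ``good'' grid at level $i$ as one that cuts no edge shorter than a $\Theta(1/k)$ fraction of the cell size, show via a union bound over the $k$ clusters around the distinct points of $T$ that a random grid is good with constant probability, amplify with the $2\log\Delta$ independent grids, and then swap the order of summation to get the $O(k)$ factor. The only differences are cosmetic constants (your threshold $2^i/(10k)$ versus the paper's $2^{i-3}/k$, and your $2/5$ versus the paper's $1/2$ for the bad-grid probability).
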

\begin{proof}
Recall that $E^*$ denotes the set of edges of the optimal matching
between points of $S$ and $T$. We say an edge $e\in E^*$ \textit{crosses} a
grid $G$ if the two endpoints of $e$ fall in different cells of $G$.

\begin{definition}[Good Grid]
A grid $G_i^j$ at level-$i$ is a \textit{good} grid if it is not
crossed by any edge $e\in E^*$ whose $\ell_1$ norm is less than
$\frac{1}{8k}$-fraction of cell
size of $G_i^j$ (i.e. $2^{i-3}/k$).
\end{definition}

To bound $Y$ in terms of $EMD(S,T)$, we show that with very high probability
at every level one of the $2\log\Delta$ randomly shifted grids is
a good grid. If we consider the set of such good grids, one per level $i$,
then every $e\in E^*$ only crosses grids whose cell size is at
most $8k||e||_1$. This allows us to charge the length of each edge $e\in E^*$ to
the grids that it crosses at different levels.

Let $E_i^j$ be the event that $G_i^j$ (i.e. the $j$-th randomly shifted grid
at level $i$) is a good grid.
The following claim states that with very high probability, there is
a good grid $G_i^j$ at every level $i$.

\begin{myclaim}\label{crossing}
$\displaystyle{\Pr[~\forall i~\exists j~ E_i^j~] \ge 1 - \frac{\log\Delta}{\Delta^2}}$.
\end{myclaim}
\begin{proof}
We assume without loss of generality that $k$ is the number of distinct points
in $T$. Each edge in the optimal matching connects one of these $k$ points to a
point in $S$. 
Any edge $e\in E^*$ where $||e||_1< 2^{i-3}/k$ connects a point $p\in T$ to a
point in $S$ which is in
a square of side length $2^{i-2}/k$ centered at $p$.
Grid $G_i^j$ is shifted by a random vector, and it intersects one of the edges
whose $\ell_1$ norm is $<2^{i-3}/k$ only if it intersects one of the $k$
squares of side length $2^{i-2}/k$ centered at points in $T$. 
The cell size of grid $G_i^j$ is $2^i$, and the side length of each square is
$2^{i-2}/k$, so the probability that a square is intersected by a line of
grid $G_i^j$ is $\le 1/2k$. By union bound the probability that
any of the $k$ squares intersect a line of grid $G_i^j$ is at most $1/2$. 
This also bounds the probability that grid $G_i^j$ is crossed by an edge of
length $<2^{i-3}/k$. Thus, the probability over random shift vectors that
grid $G_i^j$ is not a good grid is at most $1/2$.
There are $2\log\Delta$ shift vectors at level $i$, and by independence of
shift vectors the probability that all grids $G_i^j$ at level
$i$ are not good is at most 
$\big(\frac{1}{2}\big)^{2\log\Delta} = \frac{1}{\Delta^2}$.
The claim is proved by applying the union bound for all $\log\Delta$ levels.
\end{proof}

We next show how to bound $Y$ from above using Claim \ref{crossing}.
Let's assume that there is a good grid at each level $i$ denoted by $G_i^*$,
then:
\begin{equation}\label{eq1}
Y = \frac{1}{2}\sum_{i=0}^{\log\Delta} C_i\cdot2^i
  \le \frac{1}{2}\sum_{i=0}^{\log\Delta}||V_{G_i^*}(S)-V_{G_i^*}(T)||_1\cdot2^i
\end{equation}

It's easy to see that for every grid $G$: 
$||V_{G}(S)-V_{G}(T)||_1 \le 2|\{e\in E^*: e~\text{crosses}~G\}|$ for the
following reason. For every square $c\in G$, let $n_c(S)$ and $n_c(T)$ be the
the number of points of multiset $S$ and $T$ in square $c$ respectively. 
Then in every cell $c\in G$, $|n_c(S)-n_c(T)|$ is the minimum number of points
in cell $c$ that cannot be matched with a point in that square. Thus, in any
matching the total number of points that are not matched within their square is
at least $\sum_{c\in G}|n_c(S)-n_c(T)| = ||V_{G}(S)-V_{G}(T)||_1$. Each
point that is not matched within its square is an endpoint of an edge that
crosses grid $G$. Thus the number of edges in any matching that cross $G$ is
at least $||V_{G}(S)-V_{G}(T)||_1/2$. This combined with (\ref{eq1}) implies 
that:
\begin{equation}\label{eq2}
Y\le \sum_{i=0}^{\log\Delta} \Big|\{e\in E^*: e~\text{crosses}~G_i^*\}\Big|\cdot2^i
\end{equation}
Since $G_i^*$ is a good grid, the above implies that:
\begin{align*}
Y &\le \sum_{i=0}^{\log\Delta} \Big|\{e\in E^*: e~\text{crosses}~G_i^*\}\Big|\cdot2^i
  \le \sum_{i=0}^{\log\Delta} \Big|\{e\in E^*: ||e||_1 > \frac{1}{k}2^{i-3}\}\Big|\cdot2^i \\
  &\le \sum_{i=0}^{\log\Delta} 2^i\sum_{e\in E^*:||e||_1 > 2^{i-3}/k}1
  \le \sum_{e\in E^*}\sum_{i:2^{i-3}<k||e||_1} 2^i
  =  \sum_{e\in E^*}\sum_{i=0}^{3 + \log k||e||_1} 2^i\\
  &\le  \sum_{e\in E^*} 2^{4 + \log k||e||_1} = \sum_{e\in E^*} 16k||e||_1
  \le 16k\cdot EMD(S,T)
\end{align*}
Thus, $Y \le 16k\cdot EMD(S,T)$
if there is a good grid at each level $i$ which
happens with probability $1-\frac{\log\Delta}{\Delta^2}$ by Claim \ref{crossing}.
Hence, $\Pr[Y \le 16k\cdot EMD(S,T)] \ge 1-o(1)$.
\end{proof}
Using the above result, we can also show an upper bound on the expected
value of $Y$, but we omit the straightforward details. Our next lemma
establishes the lower bound on the value returned by the algorithm.
\begin{lemma}[Lower bound]\label{lowerbound}
The value $Y = \frac{1}{2}\sum_{i = 0}^{\log\Delta}2^i\cdot C_i$
is at least $\displaystyle\Omega(\frac{1}{k^2})\cdot EMD(S,T)$.
\end{lemma}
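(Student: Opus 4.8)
The plan is to prove the (deterministic, and more convenient) equivalent statement $EMD(S,T)\le O(k^{2})\sum_{i=0}^{\log\Delta}2^{i}C_{i}$. Since $C_{i}=\min_{j}C_{i}^{j}$ and each $G_{i}^{j}$ is merely \emph{some} grid of cell size $2^{i}$, it suffices to show that for \emph{every} choice of grids $H_{0},\dots,H_{\log\Delta}$ with $H_{i}$ of cell size $2^{i}$ one has $\sum_{i}2^{i}\lVert V_{H_{i}}(S)-V_{H_{i}}(T)\rVert_{1}\ge\Omega(EMD(S,T)/k^{2})$; from now on $C_{i}$ denotes $\lVert V_{H_{i}}(S)-V_{H_{i}}(T)\rVert_{1}$. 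Assume $T$ has distinct points $t_{1},\dots,t_{k}$ with multiplicities $m_{1},\dots,m_{k}$, fix an optimal assignment $\rho\colon S\to T$, and for $p\in S$ let $t(p)$ be the point of $T$ nearest to $p$ (ties broken by index). Set $U=\sum_{p\in S}\lVert p-t(p)\rVert_{1}=\sum_{p\in S}\operatorname{dist}(p,T)$, the uncapacitated assignment cost.

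The first step is to split $EMD(S,T)$. By the triangle inequality $\lVert p-\rho(p)\rVert_{1}\le\lVert p-t(p)\rVert_{1}+\lVert t(p)-\rho(p)\rVert_{1}$, hence $EMD(S,T)\le U+\operatorname{cost}(f)$ where $\operatorname{cost}(f)=\sum_{r,r'}\lVert t_{r}-t_{r'}\rVert_{1}f_{rr'}$ and $f_{rr'}=\#\{p:\rho(p)=t_{r},\,t(p)=t_{r'}\}$; note $f$ is a transportation plan between $\sigma_{r'}=\#\{p:t(p)=t_{r'}\}$ and $(m_{r})_{r}$, both supported on $\{t_{1},\dots,t_{k}\}$. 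An exchange argument shows $\operatorname{cost}(f)\le\operatorname{OPT}_{T}(\sigma,m)+2U$, where $\operatorname{OPT}_{T}(\sigma,m)$ is the optimal $\ell_{1}$-transportation cost between $\sigma$ and $m$: given a plan $f'$ with $\operatorname{cost}(f')<\operatorname{cost}(f)-2U$, reassigning the $\sigma_{r'}$ points $p$ with $t(p)=t_{r'}$ according to the $r'$-th column of $f'$ and bounding again by the triangle inequality produces a matching of cost $<U+\operatorname{cost}(f')<\operatorname{cost}(f)-U\le\operatorname{cost}(\rho)=EMD(S,T)$, contradicting optimality of $\rho$. Therefore $EMD(S,T)\le 3U+\operatorname{OPT}_{T}(\sigma,m)$, and it remains to bound each of $U$ and $\operatorname{OPT}_{T}(\sigma,m)$ by $O(k^{2})\sum_{i}2^{i}C_{i}$.

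Bounding $U$ uses only the per-level inequality $C_{i}\ge N_{i}:=\#\{p\in S:\text{the }H_{i}\text{-cell of }p\text{ has no point of }T\}\ge\#\{p:\lVert p-t\rVert_{\infty}\ge 2^{i}\ \forall t\in T\}$ (a $T$-free cell contributes its whole $S$-count to $\lVert V_{H_{i}}(S)-V_{H_{i}}(T)\rVert_{1}$), whence $\sum_{i}2^{i}C_{i}\ge\sum_{p\in S}\operatorname{dist}_{\infty}(p,T)\ge\tfrac12 U$. For $\operatorname{OPT}_{T}(\sigma,m)$, I would build a single-linkage hierarchy on $\{t_{1},\dots,t_{k}\}$ whose merge at level $\lambda$ uses connectivity threshold $2^{\lambda+3}$ (taking the threshold-$2^{4}$ components as the initial clusters, so that all merges occur at levels $\lambda\ge 1$), and bound $\operatorname{OPT}_{T}$ by the cost of the induced hierarchical rebalancing, whose dominant term is $\sum_{u\text{ internal}}\operatorname{diam}(Q_{u})\sum_{v\text{ child of }u}\lvert\sigma(Q_{v})-m(Q_{v})\rvert$ with $\operatorname{diam}(Q_{u})=O(k\cdot 2^{\lambda(u)})$, plus an $O(k\,C_{0})$ term for rebalancing within the initial clusters (which moves $O(n)$ units a distance $O(k)$). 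The crux is a \emph{cluster--imbalance lemma}: if $Q\subseteq\{t_{1},\dots,t_{k}\}$ satisfies $\operatorname{dist}(Q,T\setminus Q)>4\cdot 2^{\lambda}$, then $\lvert\sigma(Q)-m(Q)\rvert\le 2C_{\lambda}$; this is proved by comparing $\sigma(Q)$ and $m(Q)$ to the $S$- and $T$-mass of the $H_{\lambda}$-cells meeting $Q$ — by separation such cells contain no $T$-point outside $Q$ (so their $T$-mass is $m(Q)$) and each of their $S$-points has nearest $T$-point in $Q$, while any $S$-point with nearest $T$-point in $Q$ lying in none of these cells must sit in a $T$-free cell, so the $S$-mass of these cells differs from $\sigma(Q)$ by at most $N_{\lambda}\le C_{\lambda}$ and from $m(Q)$ by at most $\sum_{c\text{ meets }Q}\lvert n_{c}(S)-n_{c}(T)\rvert\le C_{\lambda}$. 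Applying the lemma with $\lambda=\lambda(\operatorname{parent}(v))-1\ge 0$ for each non-root node $v$ ($Q_{v}$ being exactly a maximal cluster at that threshold), and observing that at any fixed level at most $k$ disjoint clusters act as children, gives $\operatorname{OPT}_{T}(\sigma,m)\le O(k)\sum_{v}2^{\lambda(\operatorname{parent}(v))}C_{\lambda(\operatorname{parent}(v))-1}\le O(k^{2})\sum_{i}2^{i}C_{i}$.

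Combining, $EMD(S,T)\le 3U+\operatorname{OPT}_{T}(\sigma,m)\le O(k^{2})\sum_{i}2^{i}C_{i}=O(k^{2})\cdot 2Y$, i.e.\ $Y\ge\Omega(EMD(S,T)/k^{2})$. I expect the main obstacle to be the cluster--imbalance lemma together with the threshold bookkeeping around it: the single-linkage thresholds must be tuned so that the grid level invoked in the lemma is never negative and so that the various constants (the $\ell_{\infty}$-versus-$\ell_{1}$ factor, the cell width, the $4\cdot 2^{\lambda}$ separation margin) stay consistent, and the merge-children must be counted per level (at most $k$ of them) rather than per cluster, so that the final loss is $k^{2}$ and not $k^{3}$.
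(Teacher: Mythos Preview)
Your argument is correct and genuinely different from the paper's. The paper works directly with the optimal matching $E^{*}$: its key step (Claim~2) shows that for every grid $G$ of side $2^{i}$ one has $|\{e\in E^{*}:\|e\|_{1}\ge k\,2^{i+1}\}|\le \tfrac{k}{2}\|V_{G}(S)-V_{G}(T)\|_{1}$, proved by building a directed multigraph on grid cells, using optimality of $E^{*}$ to exclude long edges from cycles, and then decomposing the remaining long edges into at most $C_{i}/2$ simple paths of length $\le k$. Plugging this into $EMD(S,T)\le\sum_{i}2^{i}|E^{*}_{i}|$ gives the bound in two lines.

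You instead split $EMD(S,T)\le O(1)\cdot U+\mathrm{OPT}_{T}(\sigma,m)$, bound $U$ by the easy inequality $C_{i}\ge \#\{p:\mathrm{dist}_{\infty}(p,T)\ge 2^{i}\}$, and bound the $k$-point transportation $\mathrm{OPT}_{T}(\sigma,m)$ through a single-linkage tree on $T$ together with your cluster--imbalance lemma $|\sigma(Q)-m(Q)|\le 2C_{\lambda}$ for well-separated $Q$. Both routes are deterministic (they hold for every choice of grids), and both land at $EMD(S,T)\le O(k^{2})\sum_{i}2^{i}C_{i}$. The trade-off: the paper's path-decomposition is shorter and uses only the matching, but it leans on optimality of $E^{*}$ in an essential way (to kill cycles); your argument never invokes optimality of $E^{*}$ after the initial reduction and isolates a clean structural fact---the grid imbalance $C_{\lambda}$ controls the $\sigma$/$m$ discrepancy of any cluster separated from the rest of $T$ by $\gg 2^{\lambda}$---which may be reusable elsewhere. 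Your level bookkeeping (initial clusters at threshold $2^{4}$, lemma invoked at $\lambda=\lambda(\mathrm{parent})-1\ge 0$, at most $k$ children per level) is consistent, and the $O(kC_{0})$ base term is indeed $O(kn)$ since $C_{0}=2n$ under the paper's assumption that no $S$-point coincides with a $T$-point.
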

\begin{proof}
The idea of the lower bound is to charge the cost of each edge $e\in E^*$ to
the grid levels whose cell size is at most $||e||_1$. Thus at each level
only edges whose $\ell_1$ norm is at least the cell size of
that level contribute to the cost.
Then, at each level $i$ we bound from above the total number of
edges that contribute to that level in terms of
$C_{i'}=\min_j||V_{G^j_{i'}}(S) - V_{G^j_{i'}}(T)||_1$ where $i'$ is
\textit{a few} levels below $i$. Therefore, we can bound $EMD(S,T)$
from above in terms of $Y$.

For any $i$, we use $E^*_i$ to denote $\{e \in E^* : ||e||_1 \ge 2^i \}$. 
Note that $|E^*_i|\le n$ for all $i$ because there are a total of
$n$ edges in $E^*$.
We have:
\begin{align}
 EMD(S,T) &= \sum_{e\in E^*}||e||_1 \le \sum_{e\in E^*}\sum_{i = 0}^{\log||e||_1}2^i
          \le \sum_{i=0}^{\log 2\Delta}\sum_{e\in E^*: ||e||_1\ge 2^i} 2^i\nonumber\\
          & = \sum_{i=0}^{\log 2\Delta}\big|E^*_i\big|\cdot2^i
          \le \Bigg(\sum_{i= 0}^{\log k}n\cdot2^i + \sum_{i=\log 2k}^{\log 2\Delta}\big|E^*_i\big|\cdot2^i\Bigg)&\nonumber\\
          &\le 2nk + \sum_{i=\log 2k}^{\log 2\Delta}\big|E^*_i\big|\cdot2^i
          = 2nk + \sum_{i=0}^{\log 2\Delta-\log 2k}\big|E^*_{i+\log 2k}\big|\cdot 2^{i+\log 2k}\nonumber\\
          &= 2nk + 2k\sum_{i=0}^{\log \Delta-\log k}\Big|\{e\in E^*: ||e||_1\ge 2k\cdot 2^{i}\}\Big|\cdot2^{i}\label{eq3}.
\end{align}

The main tool in the proof of the lemma is the following claim which lower
bounds $C_{i} = \min_j||V_{G^j_{i}}(S) - V_{G^j_{i}}(T)||_1$
by the number of edges in the optimal matching $E^*$ whose length is at least
$k2^{i+1}$.
 
\begin{myclaim}\label{cl2}
 For all $i = 0, \ldots, \log\Delta$:
$\Big|\{e\in E^*: ||e||_1\ge k\cdot2^{i+1}\}\Big| \le \frac{k}{2}\cdot C_{i}$.
\end{myclaim}
The idea of the proof is to view grid $G$ at level $i$ as a graph where the
grid cells are vertices of the graph and the edges crossing the grid are
directed edges of the graph. We then show how to decompose the edges of the
graph into a set of paths of length $\le k$ where the start and end vertex of
each path contribute two to the value of $C_i$. Thus the total number of
such paths is at most $C_i/2$ and the total number of edges whose $\ell_1$
norm is $\ge k\cdot2^{i+1}$ is at most $kC_i/2$. The detailed proof appears in
the appendix.

We next show how to use the above claim to prove Lemma \ref{lowerbound}.
From Inequality (\ref{eq3}) above we have:
\begin{align}
 EMD(S,T) &\le 2nk + 2k\sum_{i=0}^{\log \Delta-\log k}\Big|\{e\in E^*: ||e||_1\ge k\cdot 2^{i+1}\}\Big|\cdot2^{i}\nonumber\\
          &\le 2nk + 2k^2\cdot\frac{1}{2}\sum_{i=0}^{\log \Delta-\log k}C_{i}2^i&\text{(by Claim \ref{cl2})}\nonumber\\
          &= O(k^2)\cdot Y \nonumber
\end{align}
The last inequality holds because $n= C_0/2 \le Y$.
This completes the proof of Lemma \ref{lowerbound}.
\end{proof}

Lemma \ref{upperbound} and \ref{lowerbound} together imply that our algorithm
gives a $O(k^3)$-approximation of the cost of EMD.
To ensure approximation ratio of $O(\min(k^3, \log\Delta))$, 
the algorithm holds an additional
data structure to maintain the sketch used by 
$O(\log\Delta)$-approximation algorithm of \cite{ind04}. 
The two algorithms
maintain their own sketches and at the end of the stream, each
algorithm computes its estimate of EMD using its sketch and the minimum
of the two estimates is returned. Clearly this estimate is within 
$O(\min(k^3, \log\Delta))$ factor of the cost of EMD.
Thus, we have the following:

\begin{theorem}
There is a $O(\min(k^3, \log\Delta))$-approximation that uses 
$O(\log^{O(1)}\Delta n)$ space to estimate the                                     
earth mover distance between two multiset $S,T\subseteq[\Delta]^2$ given over
a dynamic data stream, where $|S| = |T| = n$ and minimum of the number of
distinct points in $S$ and number of distinct points in $T$ is bounded by $k$. 
\end{theorem}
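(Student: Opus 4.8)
The plan is to derive Theorem 2 by combining the two bounds just proved, Lemma \ref{upperbound} and Lemma \ref{lowerbound}, with the space accounting of Section \ref{algorithm2} and the $O(\log\Delta)$-approximation of \cite{ind04}, run in parallel on the same stream. First I would record that the value reported by the second algorithm is $Z = \frac{\widehat{k}^2}{2}\sum_{i=0}^{\log\Delta}2^i\widehat{C}_i = (1\pm\epsilon)^3 k^2 Y$ with $Y=\frac12\sum_i 2^i C_i$, as already noted. Fixing $\epsilon$ to a small constant so that $(1\pm\epsilon)^3=\Theta(1)$, and combining $Y\ge\Omega(1/k^2)\cdot EMD(S,T)$ (Lemma \ref{lowerbound}, which holds deterministically once the grids are fixed) with $Y\le O(k)\cdot EMD(S,T)$ (Lemma \ref{upperbound}, which holds with probability $1-\log\Delta/\Delta^2$), I get $\Omega(1)\cdot EMD(S,T)\le Z\le O(k^3)\cdot EMD(S,T)$ with high probability. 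Multiplying $Z$ by the reciprocal of the constant hidden in the lower bound converts it into a one-sided estimate $Z'$ with $EMD(S,T)\le Z'\le O(k^3)\cdot EMD(S,T)$ with high probability; I would spell out this rescaling, because it is precisely what makes the next step sound.

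Next, alongside the data structures of Section \ref{algorithm2}, I would run the embedding-based algorithm of \cite{ind04}: it maintains an $\ell_1$-sketch of $f(S)-f(T)$ in $O(\log^{O(1)}(\Delta n))$ space and, after the analogous rescaling, produces a value $W$ with $EMD(S,T)\le W\le O(\log\Delta)\cdot EMD(S,T)$ with high probability. The algorithm outputs $\min(Z',W)$. Since, with high probability, both $Z'$ and $W$ are at least $EMD(S,T)$, so is their minimum; and since $Z'\le O(k^3)\cdot EMD$ while $W\le O(\log\Delta)\cdot EMD$, their minimum is at most $O(\min(k^3,\log\Delta))\cdot EMD(S,T)$. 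A union bound over the two failure events keeps the overall success probability at $1-o(1)$. The total space is the sum of the two budgets, $O(\log^3\Delta\cdot\log\log\Delta\cdot\log n)$ plus $O(\log^{O(1)}(\Delta n))$, which is $O(\log^{O(1)}(\Delta n))$; note that $k$ affects only the approximation ratio, never the space.

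The step that needs the most care is not any single calculation but the bookkeeping around one-sidedness. If one naively combined two two-sided $\Theta(\cdot)$-estimates, taking the minimum would bound the ratio by the \emph{product} $O(k^3\log\Delta)$, not the minimum; the gain to $O(\min(k^3,\log\Delta))$ depends on first normalizing each sub-estimate so that it is never an underestimate (outside a low-probability failure event) and only then taking the minimum. I would therefore state the one-sided guarantees for $Z'$ and for $W$ as explicit intermediate claims, each holding with probability $1-o(1)$, before invoking the minimum. After that the theorem is immediate, and there is no remaining technical obstacle, since Claim \ref{cl2} --- the path-decomposition argument deferred to the appendix --- already carries the real weight of the lower bound.
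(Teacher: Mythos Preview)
Your proposal is correct and follows essentially the same route as the paper: combine Lemmas \ref{upperbound} and \ref{lowerbound} to get $\Omega(1)\cdot EMD\le Z\le O(k^3)\cdot EMD$, run Indyk's $O(\log\Delta)$ estimator in parallel, and return the minimum. Your explicit rescaling to one-sided estimates is more careful than the paper's ``clearly'' --- though note that here the concern is milder than you suggest, since both sub-estimates already have $\Omega(1)\cdot EMD$ as their lower bound, so even without rescaling the minimum would yield an $O(\min(k^3,\log\Delta))$ ratio rather than the product.
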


\section{Conclusion}\label{conclude}
We have obtained two approximation algorithm for earth-mover distance
between two multisets of points in $[\Delta]^2$ when the number of
distinct points in one set is small. Both algorithms use polylogarithmic space.
Our algorithms can be extended to give streaming algorithms for
capacitated $k$-median clustering with hard constraints.
We conclude with some natural open questions: 1) Is there a $O(1)$-approximation
algorithm for EMD with no constraints on the input size using only polylogarithmic space?
2) Can one prove a lower bound on the best approximation possible for EMD
in polylogarithmic space?
3) Are there better streaming algorithms for the capacitated $k$-median
with hard constraints? 
\bibliographystyle{alpha}
\bibliography{template}

\begin{thebibliography}{10}

\bibitem{kraught08}
R.~Krauthgamer A.~Andoni, P.~Indyk.
\newblock Earth mover distance over high-dimensional spaces.
\newblock In {\em Proceedings of the eighteenth annual ACM-SIAM symposium on
  Discrete Algorithms}, pages 343--352, 2008.

\bibitem{classification_metric}
C.~Harrelson R. Krauthgamer K.~Talwar A.~Archer, J.~Fakcharoenphol and
  E.~Tardos.
\newblock Approximate classification via earthmover metrics.
\newblock In {\em Proceedings of the fifteenth annual ACM-SIAM symposium on
  Discrete algorithms}, pages 1079--1087, 2004.

\bibitem{emdgraphmetrics}
D.~Stubbs A.~McGregor.
\newblock Sketching earth-mover distance on graph metrics.
\newblock {\em Lecture notes in computer science}, pages 274--286, 2013.

\bibitem{const-approx}
A.~Andoni, Khanh~Do Ba, P.~Indyk, and D.~Woodruff.
\newblock Efficient sketches for earth-mover distance, with applications.
\newblock In {\em Proceedings of the 50th Annual IEEE Symposium on Foundations
  of Computer Science}, 2009.

\bibitem{metric_labeling}
J.~Naor C.~Chekuri, S.~Khanna and L.~Zosin.
\newblock Approximation algorithms for the metric labeling problem via a new
  linear programming formulation.
\newblock In {\em Proceedings of the twelfth annual ACM-SIAM symposium on
  Discrete algorithms}, pages 109--118, 2001.

\bibitem{charikar}
M.~Charikar.
\newblock Similarity estimation techniques from rounding algorithms.
\newblock In {\em Proceedings of the thiry-fourth annual ACM symposium on
  Theory of computing}, pages 380--388, 2002.

\bibitem{coreset_dynamic}
C.~Sohler G.~Frahling.
\newblock Coresets in dynamic geometric data streams.
\newblock In {\em The 37th Annual ACM symposium on Theory of Computing}, pages
  209--217, 2005.

\bibitem{ind04}
P.~Indyk.
\newblock Algorithms for dynamic geometric problems over data streams.
\newblock pages 373--380, 2004.

\bibitem{stable}
P.~Indyk.
\newblock Stable distributions, pseudorandom generators, embeddings, and data
  stream computation.
\newblock {\em Journal of the ACM (JACM)}, 53(3):307--323, 2006.

\bibitem{ind07}
P.~Indyk.
\newblock A near linear time constant factor approximation for euclidean
  bichromatic matching (cost).
\newblock In {\em Proceedings of the eighteenth annual ACM-SIAM symposium on
  Discrete Algorithms}, pages 39--42, 2007.

\bibitem{openproblems2}
P.~Indyk, A.~McGregor, I.~Newman, and K.~Onak.
\newblock Open problems in data streams, property testing, and related topics.
\newblock {\em Bernitoro Workshop on Sublinear Algorithms}, 2011.
\newblock Available at
  http://people.cs.umass.edu/~mcgregor/papers/11-openproblems.pdf.

\bibitem{thaper}
P.~Indyk and N.~Thaper.
\newblock Fast image retrieval via. embeddings.
\newblock {\em Workshop on Statistical and Computational Theories of Vision at
  ICCV}, 2003.

\bibitem{cap_kmedian1}
Y.~Rabani J.~Chuzhoy.
\newblock Approximating $k$-median with non-uniform capacities.
\newblock In {\em Proceedings of the eighteenth annual ACM-SIAM symposium on
  Discrete Algorithms}, pages 952--958, 2005.

\bibitem{lawler}
E.~Lawler.
\newblock {\em Combinatorial Optimization: Networks and Matroids}.
\newblock 1976.

\bibitem{cap_facil}
T.~Wexler M.~Pal, E.~Tardos.
\newblock Facility location with non-uniform hard capacities.
\newblock In {\em Proceedings of the 42nd IEEE symposium on Foundations of
  Computer Science}, page 329, 2001.

\bibitem{openproblems1}
A.~McGregor.
\newblock Open problems in data streams and related topics.
\newblock {\em IITK Workshop on Algorithms For Data Streams}, 2006.
\newblock Available at http://www.cse.iitk.ac.in/users/sganguly/workshop.html.

\bibitem{naor}
A.~Naor and G.~Schechtman.
\newblock Planar earthmover is not in $l_1$.
\newblock {\em SIAM Journal on Computing}, 37(3):804--826, 2007.

\bibitem{av04}
K.~Varadarajan P.~Agarwal.
\newblock A near-linear constant-factor approximation for euclidean bipartite
  matching?

\bibitem{eps_matching}
R.~Sharathkumar P.~K.~Agarwal.
\newblock A near-linear time $\epsilon$-approximation algorithm for geometric
  bipartite matching.
\newblock In {\em The 44th Annual ACM Symposium on Theory of Computing}, pages
  385--494, 2012.

\bibitem{kcoreset}
S.~Mazumdar S.~Har-Peled.
\newblock On coresets for $k$-means and $k$-median clustering.
\newblock In {\em The 36th Annual ACM Symposium on Theory of Computing}, pages
  291--300, 2004.

\bibitem{kn06}
A.~Naor S.~Khot.
\newblock Nonembeddability theorems via fourier analysis.
\newblock {\em Mathematische Annalen}, 334:821--852, 2006.

\bibitem{grayspace}
H.~Rom S.~Peleg, M.~Werman.
\newblock A unified approach to the change of resolution: Space and gray-level.
\newblock {\em IEEE Transactions on Pattern Analysis and Machine Intelligence},
  11:739--742, 1989.

\bibitem{vai89}
P.~Vaidya.
\newblock Geometry helps in matching.
\newblock {\em SIAM Journal of Computing}, 18:1201--1225, 1989.

\bibitem{tommasi2}
L.~J.~Guibas Y.~Rubner, C.~Tomasi.
\newblock A metric for distributions with applications to image data-bases.
\newblock In {\em Proceedings of the Sixth International Conference on Computer
  Vision}, pages 59--66, 1998.

\bibitem{tommasi}
L.~J.~Guibas Y.~Rubner, C.~Tomasi.
\newblock The earth mover's distance as a metric for image retrieval.
\newblock {\em International Journal of Computer Vision}, 40(2):99--121, 2000.

\bibitem{bar_yossef}
T.~S.~Jayram Z.~Bar-Yossef, R.~Kumar, D.~Sivakumar, and L.~Trevisan.
\newblock Counting distinct elements in a data stream.
\newblock In {\em Proceedings of the 6th International Workshop on
  Randomization and Approximation Techniques (RANDOM'02)}, pages 1--10, 2002.

\end{thebibliography}

\appendix
\section{Proof of Claim \ref{cl2}}
\begin{proof}
We prove that for \textit{any} grid $G^j_i$ at level $i$, the number of
edges of $E^*$ with $\ell_1$ norm at least $k2^{i+1}$ is at most $kC_i^j/2$,
where $C_i^j = ||V_{G_i^j}(S)-V_{G_i^j}(T)||_1$.

For grid $G_i^j$, we define the directed multigraph $\Gamma$ as follows. Every
cell $c$ of the grid that contains a point from $S$ or $T$ corresponds to a
vertex of $\Gamma$.
Each edge of $\Gamma$ corresponds to an edge of the matching $E^*$ that crosses
the grid. Recall that an edge crosses the grid if its endpoints are in different
cells. Each edge in $\Gamma$ has a \textit{length} which is equal to the
$\ell_1$ norm of the corresponding edge in the matching $E^*$.
Edges of $\Gamma$ are directed from points in $S$ to points in
$T$. Note that $S$ and $T$ are multisets, and there might be multiple copies of
a point in $S$ or in $T$.
Thus, if $p$ copies of a point in $S$ are matched to
$p$ copies of a point in $T$, there are $p$ copies of the same edge in the
matching $E^*$, and if the endpoints of the edge are in different cells,
there are $p$ edges between the corresponding vertices in $\Gamma$. Hence,
$\Gamma$ is in general a multigraph.

A few observations about graph $\Gamma$ are in order:
\begin{observation}\label{obs1}
 The length of every simple cycle (or every simple path) in graph $\Gamma$
is at most $k$.
\end{observation}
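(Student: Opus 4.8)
The plan is to exploit the almost-bipartite structure of $\Gamma$. Every arc of $\Gamma$ is directed out of a cell that contains a point of $S$ and into a cell that contains a point of $T$, and since we have assumed without loss of generality that $k$ is the number of \emph{distinct} points of $T$, at most $k$ cells of the grid $G_i^j$ can contain a point of $T$. (Here the \emph{length} of a path or cycle in $\Gamma$ means its number of arcs; this is the only reading under which the bound by $k$ can hold, since summing $\ell_1$-lengths of arcs would be far larger.)

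First I would handle a simple directed path $v_0 \to v_1 \to \cdots \to v_\ell$ in $\Gamma$. For each $j \in \{1,\dots,\ell\}$, the arc $v_{j-1} \to v_j$ is a crossing edge of $E^*$ whose $T$-endpoint lies in cell $v_j$, so each of $v_1,\dots,v_\ell$ contains a point of $T$. Since the path is simple these are $\ell$ pairwise distinct cells; as at most $k$ cells contain a point of $T$, we conclude $\ell \le k$, i.e. the path has at most $k$ arcs.

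For a simple directed cycle $v_0 \to v_1 \to \cdots \to v_{\ell-1} \to v_0$, the bookkeeping changes only slightly: now \emph{every} vertex $v_j$ (with indices taken modulo $\ell$) is the head of the arc entering it, so all $\ell$ distinct cells $v_0,\dots,v_{\ell-1}$ contain a point of $T$, and the same counting argument gives $\ell \le k$.

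I do not expect a genuine obstacle here; the statement follows immediately from the orientation convention on $\Gamma$. The only two points that require care are (i) charging against the $T$-side rather than the $S$-side — only the number of distinct points of $T$ is assumed bounded by $k$, so the argument is genuinely asymmetric — and (ii) the slightly different accounting of arc-heads in the path case (all vertices except possibly $v_0$) versus the cycle case (all vertices). Neither introduces any real difficulty.
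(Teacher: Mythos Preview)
Your argument is correct and is precisely the paper's approach, just spelled out in more detail: the paper's one-line justification is that $k$ bounds the number of distinct points of $T$ and hence the number of vertices of $\Gamma$ with positive indegree, which is exactly your observation that every arc-head is a cell containing a $T$-point. Your separate bookkeeping for paths versus cycles and your remark about the asymmetry between $S$ and $T$ are accurate elaborations of that same idea.
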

This holds because $k$ is an upper bound on the number distinct points in $T$
and also the number of vertices of $\Gamma$ with positive indegree.
\begin{observation}\label{obs2}
No cycle in $\Gamma$ contains an edge of length $\ge k\cdot2^{i+1}$. 
\end{observation}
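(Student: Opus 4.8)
The plan is to derive Observation \ref{obs2} from an exchange (uncrossing) argument that uses nothing but the optimality of $E^*$. Suppose for contradiction that some directed cycle of $\Gamma$ contains an edge $e$ with $||e||_1 \ge k\cdot 2^{i+1}$. Any directed closed walk in $\Gamma$ decomposes into simple directed cycles, and $e$ lies on one of them, so we may assume the cycle is simple, say $C = (v_1 \to v_2 \to \cdots \to v_\ell \to v_1)$; by Observation \ref{obs1} its number of edges satisfies $\ell \le k$. Write $e_j$ for the edge of $C$ directed from $v_j$ to $v_{j+1}$ (indices taken mod $\ell$, so $v_{\ell+1}=v_1$); by the definition of $\Gamma$, $e_j$ corresponds to a matched pair $(p_j,q_j)\in E^*$ with $p_j\in S$ lying in cell $v_j$ and $q_j\in T$ lying in cell $v_{j+1}$.

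The core step is a cyclic rematching. Observe that cell $v_{j+1}$ contains both the $S$-point $p_{j+1}$ (the tail of $e_{j+1}$) and the $T$-point $q_j$ (the head of $e_j$). Form a new matching $M$ from $E^*$ by deleting every edge $(p_j,q_j)$, $j=1,\dots,\ell$, and inserting the edges $(p_{j+1},q_j)$, $j=1,\dots,\ell$, keeping all other edges of $E^*$ unchanged. On the points $p_1,\dots,p_\ell$ and $q_1,\dots,q_\ell$ this is just a cyclic shift of the old bijection, so $M$ is again a perfect matching between $S$ and $T$; here I would carefully note that these $2\ell$ points are genuinely distinct (the cells $v_1,\dots,v_\ell$ are pairwise distinct since $C$ is simple, and $S$ and $T$ share no location), so there is no multiset subtlety. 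Each inserted edge $(p_{j+1},q_j)$ has both endpoints in a single grid cell of side length $2^i$, hence $\ell_1$-length strictly less than $2^{i+1}$. Consequently the total length of the inserted edges is less than $\ell\cdot 2^{i+1} \le k\cdot 2^{i+1} \le ||e||_1 \le \sum_{j=1}^{\ell}||e_j||_1$, i.e. strictly less than the total length of the deleted edges, so $\mathrm{cost}(M) < \mathrm{cost}(E^*)$. This contradicts the minimality of $E^*$, which proves the observation.

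The argument is short, and the only place demanding care — and hence the main (minor) obstacle — is making the rematching rigorous in the multiset model: one must confirm that the cyclic shift really produces a one-to-one assignment of copies (handled by the pairwise distinctness of the cells of $C$ together with the location-disjointness of $S$ and $T$) and that reducing a general cycle to a simple one before performing the exchange is legitimate. The diameter bound ``every two points of a $2^i$-cell are at $\ell_1$-distance $<2^{i+1}$'' and the final chain of inequalities are then immediate.
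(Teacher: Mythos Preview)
Your proposal is correct and follows essentially the same exchange (cyclic rematching) argument as the paper: assume a cycle contains a long edge, rematch each $T$-point $q_j$ to the $S$-point $p_{j+1}$ lying in the same cell, and observe that the new matching is cheaper since the cycle has at most $k$ edges and each new edge has $\ell_1$-length below $2^{i+1}$. You are in fact more careful than the paper on three points (reduction to a simple cycle, the multiset/distinctness issue, and making the inequality strict), but the underlying idea is identical.
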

If such a cycle exists, there are
matched pairs 
$(s_1,t_1),(s_2,t_2), \ldots, (s_\kappa, t_\kappa)\in E^*$ such that
for every $r\in[\kappa]$, points $t_r$ and $s_{r+1}$
(we define $\kappa + 1 = 1$) are in the same cell in grid $G_i^j$
and the length of one of the edges is $\ge k2^{i+1}$
(see Figure \ref{cycle}).
Thus, the matching $E^*$ costs at least $k2^{i+1}$.

\begin{figure}
\begin{center}
\scalebox{.7}
{
\input{./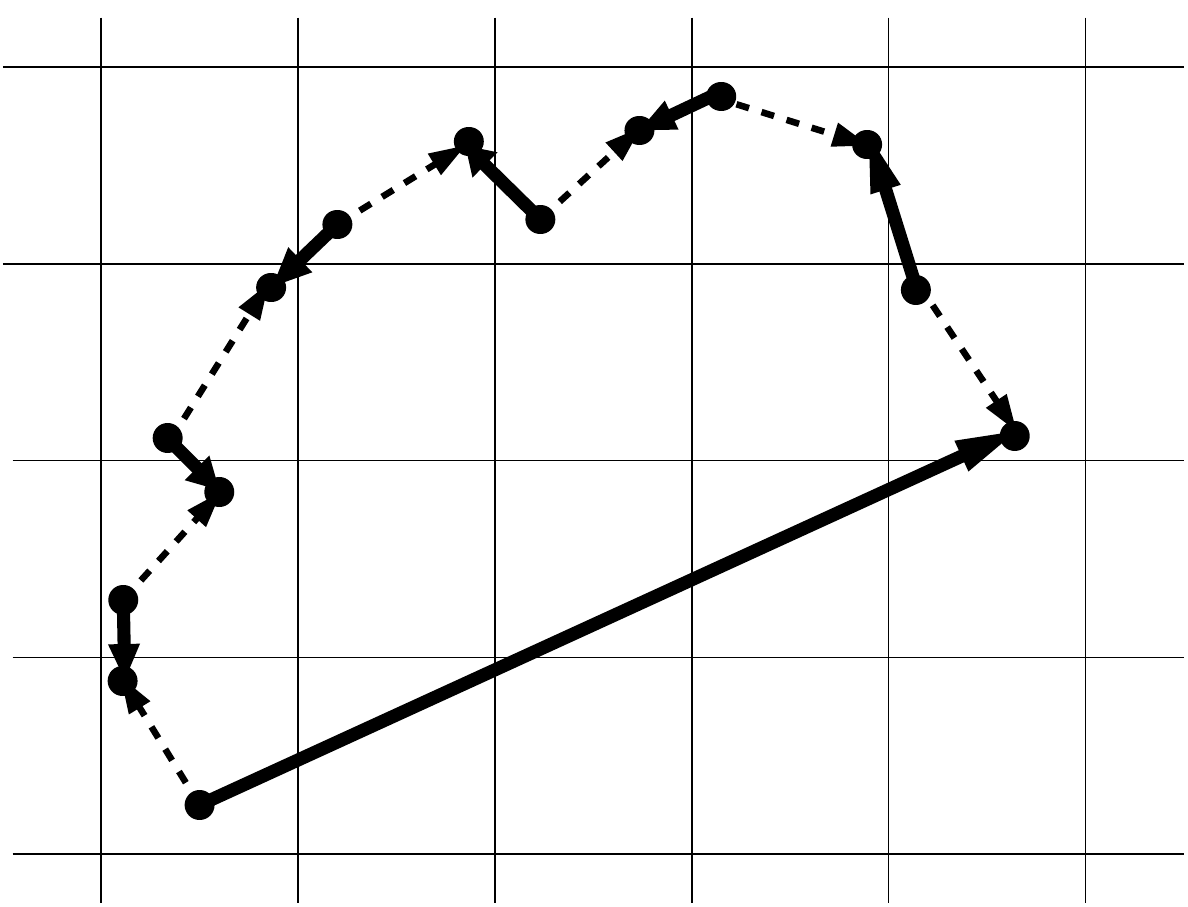_t}
}
\caption{The solid edges show the matching of the pairs
$(s_1,t_1),(s_2,t_2),\ldots$. These edges form a cycle
in the corresponding graph $\Gamma$. If there is an edge $(s_r,t_r)$ with
length $\ge k2^{i+1}$, there is an alternative matching
$(t_1,s_2),(t_2, s_3),\ldots$ (shown in dotted lines) with lesser cost.
\label{cycle}}
\end{center}
\end{figure}
However, there is an alternative matching
$(t_1,s_2),(t_2, s_3),\ldots,(t_\kappa, s_1)$
that costs at most $k2^{i+1}$ because the are at most $k$
pairs of points (by Observations \ref{obs1}),
and each pair is in the same cell of grid $G_i^j$.
This contradicts the optimality of $E^*$.

The next observation is a connection between the number of points of $S$ and $T$
in a cell of the grid, and the indegree and outdegree of the corresponding cell
in graph $\Gamma$. Let $c$ be any cell in $G_i^j$ and $v_c$ be the corresponding
vertex in graph $\Gamma$. Let $\deg^+(v_c)$ and $\deg^-(v_c)$ denote the
outdegree and indegree of $v_c$ respectively, and let $n_c(S)$ and $n_c(T)$
denote the number of points of $S$ and $T$ in the cell $c$.
\begin{observation}
For every cell $c\in G_i^j$ and corresponding vertex $v_c$ in $\Gamma$:

$\deg^+(v_c)-\deg^-(v_c)=n_c(S)-n_c(T)$.
\end{observation}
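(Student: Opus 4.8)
The plan is to establish this degree identity by a direct counting argument that sorts the edges of $E^*$ incident to the points of a fixed cell $c$ into those matched \emph{inside} $c$ and those \emph{crossing} $G_i^j$ out of $c$; only the crossing ones become edges of $\Gamma$ at $v_c$, so once the internal pairs are accounted for the identity falls out.

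Concretely, I would fix a cell $c \in G_i^j$ and let $m_c$ be the number of matched pairs $(s,t) \in E^*$ --- counted with multiplicity, since $S$ and $T$ are multisets --- with both $s$ and $t$ lying in $c$. Since the optimal matching is a bijection, each of the $n_c(S)$ points of $S$ in $c$ is matched to a unique point of $T$: that partner is either in $c$, contributing to one of the $m_c$ internal pairs, or in another cell, in which case the matching edge crosses $G_i^j$ and contributes exactly one outgoing edge at $v_c$. Hence $\deg^+(v_c) = n_c(S) - m_c$, and the symmetric argument from the $T$-side gives $\deg^-(v_c) = n_c(T) - m_c$. Subtracting cancels $m_c$ and yields $\deg^+(v_c) - \deg^-(v_c) = n_c(S) - n_c(T)$. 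The only thing needing care is the multiset bookkeeping: if $p$ copies of a point of $S$ in $c$ are matched to $p$ copies of a point of $T$ outside $c$, this must count as $p$ parallel edges of $\Gamma$ and as $p$ units of $\deg^+(v_c)$ --- but this is precisely how $\Gamma$ was defined, so nothing goes wrong.

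I do not expect this Observation itself to be the hard part; it is essentially flow conservation. Its real purpose is to finish the proof of Claim~\ref{cl2}, and the remaining plan is: (i) view $\Gamma$ as a unit flow and apply a standard flow decomposition, writing its edge set as a union of simple directed paths and simple directed cycles, where the number of paths is at most $\tfrac12\sum_{c}|n_c(S) - n_c(T)| = \tfrac12 C_i^j$ since each path contributes one unit to the imbalance of each of its two endpoint cells; (ii) apply Observation~\ref{obs2} to conclude that no edge of length $\ge k2^{i+1}$ can lie on a cycle, so every such long edge lies on one of the paths; (iii) apply Observation~\ref{obs1} to bound the number of edges on each simple path by $k$; hence the number of long edges of $E^*$ is at most $k \cdot \tfrac12 C_i^j$, and since this holds for every $j$ while the left-hand side does not depend on $j$, taking the minimum over $j$ gives $|\{e \in E^* : \|e\|_1 \ge k2^{i+1}\}| \le \tfrac{k}{2} C_i$. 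The step I expect to require the most care --- and thus the main obstacle --- is (i): verifying that the path/cycle decomposition of the multigraph $\Gamma$ really does use at most $C_i^j/2$ paths, i.e.\ that the ``excess'' out-degree cells can be matched to the ``excess'' in-degree cells so that each path carries one unit at each end, which is exactly where the degree identity of this Observation is used.
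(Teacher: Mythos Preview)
Your proof of the Observation is correct and is essentially the paper's argument made explicit: the paper also notes that outgoing edges at $v_c$ correspond to $S$-points in $c$ not matched within $c$, incoming edges to $T$-points not matched within $c$, and that the internally matched points cancel; your introduction of $m_c$ just names this common cancelled quantity. Your outlined continuation for Claim~\ref{cl2} (flow decomposition into at most $C_i^j/2$ paths plus cycles, long edges avoid cycles by Observation~\ref{obs2}, each path has at most $k$ edges by Observation~\ref{obs1}) is likewise the same strategy the paper uses, where the paper carries out the decomposition explicitly by repeatedly growing a path through a long edge until it hits vertices with degree imbalance.
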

For every edge out of $v_c$,
there should be a point of $S$ in cell $c$, and for every edge into $v_c$ there
should be a point of $T$ in cell $c$. The rest of the points of $S$ and $T$
in cell $c$ that are not an endpoint of an edge of $\Gamma$ are matched within
the cell. The number of points of $S$ that are matched within $c$ should be
equal to the number of points of $T$ that are matched within $c$, so these
points don't contribute any value to $n_c(S)-n_c(T)$.
Hence, $\deg^+(v_c)-\deg^-(v_c)=n_c(S)-n_c(T)$. 

By summing over all cells $c$ in the grid, this observation implies that:
\begin{equation}\label{eq4}
 \sum_{v\in{V(\Gamma)}}|\deg^+(v)-\deg^-(v)| = \sum_{c\in G_i^j}|n_c(S)-n_c(T)| = ||V_{G_i^j}(S)-V_{G_i^j}(T)||_1 = C_i^j
\end{equation}

We are now ready to prove claim \ref{cl2}. The idea is
to decompose the edges of graph $\Gamma$ into a
set of paths where each path contains at least one of the edges of length
$\ge k2^{i+1}$.
We show that the graph $\Gamma$ can be decomposed into at most
$C_i^j/2$ such paths and each path contains at most $k$
edges of length $\ge k2^{i+1}$.

The decomposition works in a natural way as follows.
Let $e=(u,t)$ be any edge such that $||e|| > k2^{i+1}$.
We show how to construct a path that contains $e$.
If $\deg^-(t) > \deg^+(t)$, $t$ is the end of the path. Otherwise,
$\deg^-(t) \le \deg^+(t)$ and there is an edge $(t,w)$ going out of $t$. By
the same argument either $\deg^-(w) > \deg^+(w)$ or there is an edge going
out of $w$. This process can be repeated until a vertex $z$ is reached such
that $\deg^-(z) > \deg^+(z)$. We mark $z$ to be the end of the path.
The original edge $e=(u,t)$ can also be traced in the opposite direction to
reach a vertex $a$ such that $\deg^+(a) > \deg^-(a)$.
Note that by Observation \ref{obs2}
edge $e$ is not in any cycle, so the start and end vertex of the path can't
be the same vertex. Removing all the edges of this path from graph $\Gamma$
reduces $\sum_{v\in{V(\Gamma)}}|\deg^+(v)-\deg^-(v)|$ by two because
the quantity $|\deg^+(v)-\deg^-(v)|$ is reduced by one 
for the start and end vertices of the path, and for all
other vertices this quantity is unchanged.

After removing the edges of this path from graph $\Gamma$,
the remaining graph may still contain
edges of length $\ge k2^{i+1}$. We can choose any one of these edges and
repeat the above process to find another path and remove it from the graph.
This process can be repeated until there are no edges of length
$\ge k\cdot2^{i+1}$ in the graph. Each time a path is extracted the quantity
$\sum_{v\in{V(\Gamma)}}|\deg^+(v)-\deg^-(v)|$ reduces by two. Thus, the total
number of such paths is at most
$\frac{1}{2}\sum_{v\in{V(\Gamma)}}|\deg^+(v)-\deg^-(v)|$
which equals $C_i^j/2$ by Equation (\ref{eq4}).
Each path contains at most $k$ edges of length $\ge k\cdot2^{i+1}$
by Observation \ref{obs1}. Thus the total number of such edges is
bounded by $kC_i^j/2$. 
\end{proof}

\end{document}